\theoremstyle{remark}
\renewcommand{\Re}
\newcommand{\x}{{x}}
\newcommand{\y}{{y}}
\newcommand{\real}{{\rm Re}}
\newcommand{\imag}{{\rm Im}}
\newcommand{\be}{\begin{equation}}
	\newcommand{\ee}{\end{equation}}
\newcommand{\bea}{\begin{eqnarray}}
	\newcommand{\eea}{\end{eqnarray}}
\global\long\def\l{\la}
\global\long\def\ell#1{\theta_{#1}}
\global\long\def\bell#1{\tilde\theta_{#1}}
\global\long\def\la{\lambda}
\global\long\def\eps{\epsilon}
\global\long\def\al{\alpha}
\global\long\def\no{\nonumber}
\newtheorem{theorem}{Theorem}
\def\ir{{\mathrm i}}
\def\eE{{\mathrm e}}
\begin{document}
\title{Exact spin helix eigenstates in the anisotropic spin-$s$ Heisenberg model with arbitrary dimensions}
\author{Mingchen Zheng}
\affiliation{Beijing National Laboratory for Condensed Matter Physics, Institute of
Physics, Chinese Academy of Sciences, Beijing 100190, China}
\author{Chenguang Liang}
\affiliation{School of Physics, Peking University, Beijing 100871, China}
\author{Shu Chen}
\affiliation{Beijing National Laboratory for Condensed Matter Physics, Institute of
Physics, Chinese Academy of Sciences, Beijing 100190, China}
\affiliation{School of Physical Sciences, University of Chinese Academy of Sciences,
Beijing 100049, China}
\author{Xin Zhang}
\email{xinzhang@iphy.ac.cn}
\affiliation{Beijing National Laboratory for Condensed Matter Physics, Institute of Physics, Chinese Academy of Sciences, Beijing 100190, China}

\date{\today}

\begin{abstract}
Spin helix states—characterized by their spatially modulated spin textures—are exact eigenstates of the one-dimensional anisotropic spin-$\frac12$ Heisenberg model under specific parameter conditions. In this work, we extend this framework by constructing exact spin helix eigenstates for the fully anisotropic XYZ Heisenberg model with arbitrary spatial dimensions and arbitrary spin quantum numbers. Our results demonstrate that some non-trivial exact eigenstates can persist beyond the integrable regime. The XXZ and XY cases are also analyzed to clarify the conditions for the emergence of spin helix eigenstates and their key properties. Our results broaden the class of analytically tractable exact eigenstates in non-integrable systems and deepen understanding of spin modulation states in many-body systems. 

\end{abstract}

\maketitle

\section{Introduction}

Exact eigenstates in quantum many-body systems have attracted growing interest, as they offer rare analytic access to the structure of strongly correlated quantum models \cite{sutherland_beautiful_2004,amico2008entanglement}. These states naturally arise in integrable systems \cite{baxterexactly,korepin1997}, where an extensive set of conserved quantities enables exact solutions and leads to constrained, nonergodic dynamics describable by generalized Gibbs ensembles \cite{comGGE2015,Ilievski_2016,Essler_2016,Vidmar_2016,SQGGE} and generalized hydrodynamics \cite{PhysRevX.6.041065,PhysRevLett.117.207201,GHDrevintro,Alba_2021,DeNardis_2022,PhysRevX.15.010501}. Remarkably, exact eigenstates can also emerge in nonintegrable systems, most notably as quantum many-body scar states (QMBS), in which a set of atypical, long-lived coherent states coexist with an otherwise thermalizing spectrum \cite{scarreview1,scarreview3,scarreview4,PXP-1,PXP-2,PXP-3,unified-structure,AKLT-1,AKLT-2,XY-1,XY-2,onsager,qsymmetry-2,qsymmetry-3,qsymmetry,motrunich2025PXP}. Such states offer valuable benchmarks for understanding entanglement structure \cite{amico2008entanglement,eisert2008area}, coherence and ergodicity breaking \cite{ETH-5,nandkishore2015,mblrmp,MBL2025}, and serve as experimentally accessible targets for testing quantum dynamics and controlling many-body evolution in programmable quantum simulators \cite{Rydberg,Rydberg-2}.  

A particularly striking class of exact eigenstates emerges in the paradigmatic one-dimensional spin-$\frac12$ XXZ/XYZ Heisenberg chain with periodic boundary conditions, where the so-called spin helix states (SHSs) arise under fine-tuned conditions (the anisotropy parameter 
$\eta$ takes root-of-unity values) \cite{PhantomShort, Jepsen2022,OpenXYZ2022,Zhang2024}. These eigenstates exhibit spatially rotating spin textures and provide rare examples of highly structured, nonthermal excited eigenstates in an interacting many-body system. 
The origin of SHSs can be traced back to Baxter’s seminal work on the eight-vertex (XYZ) model in the 1970s \cite{Baxter5a,Baxter5,Baxter6}, where spectral degeneracies were observed at special parameter values. These degeneracies were later understood to arise from an enhanced quantum group symmetry $U_q(\mathfrak{sl}_2)$ with $q$ being a root of unity \cite{pasquier1990common,deguchi2001sl2}. More recently, SHSs in the spin-$\frac12$ XXZ chain have been interpreted via the Bethe ansatz as arising from “phantom” Bethe roots, in which all quasi-particles share identical momentum and do not scatter \cite{PhantomShort}. Importantly, SHSs have been experimentally prepared in ultracold atom systems simulating the XXZ chain, where long-lived helical spin patterns were detected \cite{Jepsen2021,Jepsen2022}. Moreover, SHSs persist beyond the integrable setting, appearing in higher-dimensional and higher-spin extensions of the XXZ model where they represent robust examples of QMBS \cite{Jepsen2022}, suggesting a unifying framework for nonthermal states across integrable and nonintegrable regimes. 


Despite these progresses, the understanding of SHSs in more general settings—such as the fully anisotropic XYZ Heisenberg model—remains incomplete.
The XYZ model features unequal couplings along all spin directions and lacks continuous symmetries, unlike the $U(1)$-symmetric XXZ chain. 
Although several exact spin helix states have been identified in specific XYZ settings \cite{granovskii1985coherent,granovskii1985periodic}, the full scope of their structure and property—particularly across diverse lattice geometries—remains elusive.

In this work, we investigate SHSs in the XYZ model and demonstrate that, under specific parameter conditions, it supports eigenstates with helical structure. We further extend the analysis to arbitrary spatial dimensions and arbitrary spin, and offer a unified framework for understanding spin helix eigenstates across different parameter regimes. We also examine the XXZ and XY cases in detail, clarifying the characteristics and parameter constraints of spin helix eigenstates in these cases. Our results provide a comprehensive perspective on analytically tractable, spatially inhomogeneous eigenstates in interacting quantum spin systems.

This paper is organized as follows. In Section \ref{sec:theta}, we introduce the notation and conventions for elliptic theta functions, which are essential for the following discussion. Section \ref{sec:XYZ} presents the construction and characterization of spin helix eigenstates in the $d$-dimensional hypercubic XYZ spin model. In Section \ref{sec:XXZ&XY}, we analyze the degenerate limits of the XYZ model—such as the XXZ and XY cases—and demonstrate how spin helix eigenstates persist in these regimes. By expanding SHS, Section \ref{sec:esXYZ} gives some analytically tractable eigenstates of the XYZ model. In Section \ref{sec:SHS}, we extend our discussion to other classes of Heisenberg models that support spin helix eigenstates.  Finally, the summary and outlook are given in Section \ref{sec:summary}.

\section{Theta functions}\label{sec:theta}

We introduce the following Jacobi theta functions $\vartheta_{\al}(u,q)$ \cite{WatsonBook}
\begin{align}
\begin{aligned}
&\vartheta_{1}(u,q)=2\sum_{n=0}^\infty(-1)^n q^{(n+\frac12)^2}\sin[(2n+1)u],\\
&\vartheta_{2}(u,q)=2\sum_{n=0}^\infty q^{(n+\frac12)^2}\cos[(2n+1)u],\\
&\vartheta_{3}(u,q)=1+2\sum_{n=1}^\infty q^{n^2}\cos(2nu),\\
&\vartheta_{4}(u,q)=1+2\sum_{n=1}^\infty (-1)^nq^{n^2}\cos(2nu).
\end{aligned}
\end{align}
For convenience, we use the following shorthand notations $ \ell{\al}(u),\,\bell{\al}(u),\,\alpha=1,2,3,4$
\begin{align}
&\ell{\al}(u) \equiv  \vartheta_{\al} (\pi u,\eE^{\ir\pi\tau}),\quad \bell{\al}(u) \equiv   \vartheta_{\al} (\pi u ,\eE^{2\ir\pi\tau}),
\end{align}
where $\tau$ is a complex number with a positive imaginary part. Further details on theta functions can be found in Appendix \ref{App;Theta}.

\section{XYZ Model and its exact spin helix eigenstate}\label{sec:XYZ}
The Hamiltonian of the spin-$s$ XYZ Heisenberg model is
\begin{align}
&H=\sum_{\langle i,j\rangle }H_{i,j},\quad \label{Hamiltonian}
\end{align}
where $H_{i,j}$ is the two-site spin-spin interaction
\begin{equation}
H_{i,j}=J_x{S}_i^xS_{j}^x+J_yS_i^yS_{j}^y+J_zS_i^zS_{j}^z,\label{Hij}
\end{equation}
Here, ${S}_{j}^{\alpha }$ $(\alpha = x, \ y, \ z)$ denote the spin-$s$ operators at lattice site $j$ of a $d$-dimensional hypercubic lattice with volume $V=\prod_{\alpha=1}^d L_\alpha$, where $L_\alpha$ is the number of sites along the $\alpha$-th spatial direction. The notation ${\langle i,j\rangle }$ indicates summation over nearest-neighbor site pairs, and the exchange coefficients are parameterized by theta functions as follows \cite{Wang-book}
\begin{align}
&J_x=\frac{\ell{4}(\eta)}{\ell{4}(0)},\quad J_y=\frac{\ell{3}(\eta)}{\ell{3}(0)},\quad J_z=\frac{\ell{2}(\eta)}{\ell{2}(0)}.\label{Jxyz;1}
\end{align}
Obviously, the Hamiltonian depends on two parameters: $\eta$ and $\tau$.
From the identities 
\begin{align}
J_\alpha|_{\eta\to \eta+2}=J_\alpha,\quad J_\alpha|_{\eta\to \eta+2\tau}=\eE^{-4\ir\pi(\eta+\tau)}J_\alpha,
\end{align}
we can thus restrict parameter $\eta$ to the rectangle $0\leq \real (\eta) < 2,\,0\leq \imag (\eta)<2\imag (\tau)$ without loss of generality.

When $\tau$ is purely imaginary and
$\eta$ is real or purely imaginary, the Hamiltonian in Eq. (\ref{Hamiltonian}) is Hermitian, as follows
\begin{itemize}
\item when $\imag(\eta)=\real (\tau)=0$,\,\, $|J_x|\geq |J_y|\geq |J_z|$, 
\item when $\real(\eta)=\real(\tau)=0$,\,\, $|J_x|\leq |J_y|\leq |J_z|$.
\end{itemize}

Let us first introduce the following local vector
\begin{align}
\psi^{(s)}(u)&=\frac{1}{\mathcal{N}(u)}\sum_{n=0}^{2s}\kappa_n\left[\bell{1}(u)\right]^{2s-n}\left[\bell{4}(u)\right]^{n}\ket{s-n},\label{LocalVector}\\
\kappa_n&=\sqrt{(2s)!/(n!(2s - n)!)},\no\\
\mathcal{N}(u)&=\left[\ell{4}(\real (u))\ell{3}(\ir\imag(u))\right]^{s}\no
\end{align}
where $u$ is a free parameter, $\{\ket{m}\}$ are the $S^z$ basis with $S^z\ket{m}=m\ket{m}$. 
The state $\psi^{(s)}(u)$ in \eqref{LocalVector} is a  spin coherent state which can be obtained by rotating the highest-weight state \(\ket{s}\) first around the \(y\)-axis and then around the \(z\)-axis, specifically as follows
\begin{align}
\psi^{(s)}(u)\propto\eE^{-\ir \beta(u) S_z} \eE^{-\ir \gamma(u) S_y} \ket{s},\label{rotation}
\end{align}
where
\begin{align}
\gamma(u) = 2\arctan\left|\frac{\bell{4}(u)}{\bell{1}(u)}\right|,\quad 
\beta(u) = \arg\left(\frac{\bell{4}(u)}{\bell{1}(u)}\right).\label{angle}
\end{align}
Thus we can also use $\gamma(u)$ and $\beta(u)$ to represent the state $\psi^{(s)}(u)$, and the expectation value of $S^\alpha$ in the state $\psi^{(s)}(u)$ reads
\begin{align}
\begin{aligned}\label{Exp}
\langle S_x\rangle&=s\sin(\gamma(u))\cos(\beta(u)),\\
\langle S_y\rangle&=s\sin(\gamma(u))\sin(\beta(u)),\\
\langle S_z\rangle&=s\cos(\gamma(u)).
\end{aligned}
\end{align}
The proof of Eqs. (\ref{rotation}) and (\ref{Exp}) is presented in Appendix \ref{App;coherent}. 


The local vector $\psi^{(s)}_i(u)$ in (\ref{LocalVector}) satisfies the following divergence condition: 
\begin{align}
&H_{i,j}\,\psi^{(s)}_i(u)\psi^{(s)}_j(u\pm\eta)
=s^2b(\pm u)\psi^{(s)}_i(u)\psi^{(s)}_j(u\pm \eta)\no\\&\quad \pm s\left[a(u)S_i^z-a(u\pm\eta)S_j^z\right]\,\psi^{(s)}_i(u)\psi^{(s)}_j(u\pm\eta),\label{div;eq}
\end{align}
where
\begin{align}
&a(u)=\frac{\ell{1}(\eta)\ell{2}(u)}{\ell{2}(0)\ell{1}(u)},\quad b(u)= g(\eta)+g(u)-g(u+\eta),\no\\
&g(u) = \frac{\ell{1}(\eta)\ell{1}'(u)}{\ell{1}'(0)\ell{1}(u)},\quad \theta'_j(u)=\frac{\partial\theta_j(u)}{\partial u}.\label{def;ab}
\end{align}
Equation (\ref{div;eq}) implies that only two non-eigen terms arise when $H_{i,j}$ acts on the product state $\psi^{(s)}_i(u)\psi^{(s)}_j(v)$ with $u=v\pm \eta$. When $s=\frac12$, Eq. (\ref{div;eq}) was established in Refs. \cite{MPA2021,OpenXYZ2022,Zhang2024}. In this paper, we generalize it to an arbitrary $s$. The proof of Eq. (\ref{div;eq}) relies on several elliptic function identities. Due to the complexity of the derivation, the full proof is provided in Appendix \ref{App;Proof}.



Inspired by the identities presented in Eq. (\ref{div;eq}), we can construct the following tensor product state 
\begin{align}
\ket{\Psi^{(s)}(u,\bm{\eps})}=\bigotimes_{j}\psi^{(s)}_{j}\left(u+\eta\,\bm{\eps}\cdot\bm{n}_j\right),\quad u\in\mathbb{C},\label{psi1}
\end{align}
where $\bm{n}_j=(n_{j,1},\ldots,n_{j,d})$ is the coordinate of site $j$ in the $d$-dimensional hypercubic lattice, $\bm{\eps}=(\eps_1,\ldots,\eps_d)$, $\eps_\alpha=\pm1$ is a $d$-dimensional vector, and $\bm{\eps}\cdot\bm{n}_j=\eps_1n_{j,1}+\dots+\eps_d\,n_{j,d}$ describes a phase that exhibits linear spatial variation along each lattice direction. A visualization of the 2-dimensional phase factor $\bm{\eps}\cdot\bm{n}_j$ is demonstrated in  Fig. \ref{Fig:phase}. Such an inhomogeneous state was first proposed by Baxter in the pioneering study of the one-dimensional spin-$\frac12$ XYZ model \cite{Baxter5a,Baxter6}. In Eq. (\ref{psi1}), we generalize Baxter's result to $d$-dimensional spin-$s$ Heisenberg chain. 
\begin{figure}[htbp]
\centering
\begin{tikzpicture}
\foreach \x in {0,...,2} {
  \draw[thick,gray,dashed] ({2*\x}, -1) -- ({2*\x}, 5); 
  \draw[thick,gray,dashed] (-1, {2*\x}) -- (5, {2*\x}); 
}

\node at (0-0.4,0-0.4) {$0$};
\node at (2-0.4,0-0.4) {$\eps_1$};
\node at (4-0.4,0-0.4) {$2\eps_1$};

\node at (0-0.4,2-0.4) {$\eps_2$};
\node at (2-0.6,2-0.4) {$\eps_1+\eps_2$};
\node at (4-0.6,2-0.4) {$2\eps_1+\eps_2$};

\node at (0-0.4,4-0.4) {$2\eps_2$};
\node at (2-0.6,4-0.4) {$\eps_1+2\eps_2$};
\node at (4-0.65,4-0.4) {$2\eps_1+2\eps_2$};

\foreach \x in {0,2,4} {
  \foreach \y in {0,2,4} {
    \fill[black] (\x,\y) circle (2pt);
  }
}
\end{tikzpicture}
\caption{The spatial variation of the phase factor $\bm{\eps}\cdot\bm{n}_j$ across the 2$d$ lattice. Here $\epsilon_1$ and $\epsilon_2$ can take the value $1$ or $-1$. }\label{Fig:phase}
\end{figure}

\begin{theorem}\label{Thm1}
Under periodic boundary conditions, the vector $\ket{\Psi^{(s)}(u,\bm{\eps})}$ in Eq. (\ref{psi1}) is an eigenstate of Hamiltonian $H$ \eqref{Hamiltonian} with energy
\begin{equation}\begin{split}
E
=&d\,s^2\frac{\ell{1}'(\eta)}{\ell{1}'(0)}V + 4\ir\pi s^2 \frac{\ell{1}(\eta)}{\ell{1}'(0)}\sum_{\beta=1}^d p_\beta\prod_{\alpha \neq \beta} L_\alpha,\label{energy}
\end{split}
\end{equation}
when the elliptic commensurability condition holds
\begin{align}
L_\alpha\eta=2 p_\alpha\tau  + 2 q_\alpha,\,\,\alpha=1,\dots,d,\,\,p_\alpha,q_\alpha\in\mathbb{Z}.\label{RootUnity}
\end{align}
\end{theorem}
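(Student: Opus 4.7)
The plan is to reduce the theorem to the bond-local divergence identity~\eqref{div;eq} applied along each edge of the lattice, and then exploit the periodic boundary conditions together with the commensurability condition~\eqref{RootUnity} to eliminate the remaining $S^{z}$ operators. Writing $\tilde u_i = u + \eta\,\bm\eps\cdot\bm n_i$ for the local argument at site $i$, a nearest neighbour $j=i+\hat e_\alpha$ has argument $\tilde u_j = \tilde u_i + \eps_\alpha\eta$, so~\eqref{div;eq} immediately yields
\[
H_{i,j}\ket{\Psi^{(s)}(u,\bm\eps)} = \Bigl[s^2 b(\eps_\alpha\tilde u_i)+\eps_\alpha s\bigl(a(\tilde u_i)S_i^z - a(\tilde u_j)S_j^z\bigr)\Bigr]\ket{\Psi^{(s)}(u,\bm\eps)},
\]
since the product-state structure lets the identity act on the two active tensor factors while leaving every other factor untouched.

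Summing $H=\sum_{\langle i,j\rangle}H_{i,j}$ and organising the result direction by direction, the $S^z$ piece splits into sums along individual one-dimensional lines in each spatial direction $\alpha$. Along a line of $L_\alpha$ sites the adjacent-bond contributions form a lattice divergence that telescopes to $\eps_\alpha s\bigl(a(\tilde u_1)S_1^z - a(\tilde u_{L_\alpha+1})S_{L_\alpha+1}^z\bigr)$; under periodic boundary conditions the two labelled sites coincide, while the commensurability condition $L_\alpha\eta = 2p_\alpha\tau+2q_\alpha$ combined with the double periodicity $a(u+2) = a(u+2\tau) = a(u)$ (a direct consequence of the quasi-periods of $\ell{1}$ and $\ell{2}$) forces $a(\tilde u_{L_\alpha+1}) = a(\tilde u_1)$, so the entire $S^z$ contribution vanishes. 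This is the step that establishes the eigenstate property of $\ket{\Psi^{(s)}(u,\bm\eps)}$.

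To extract the eigenvalue I insert $b(u) = g(\eta) + g(u) - g(u+\eta)$. The constant $g(\eta) = \ell{1}'(\eta)/\ell{1}'(0)$ contributes $dV\cdot s^2\,\ell{1}'(\eta)/\ell{1}'(0)$ over the $dV$ bonds, producing the bulk term in~\eqref{energy}. The remaining $g(\tilde u_i) - g(\tilde u_i + \eps_\alpha\eta)$ again telescopes along each line to $g(\tilde u_1) - g(\tilde u_1 + \eps_\alpha L_\alpha\eta)$, and here the quasi-periodicity $g(u+2\tau) = g(u) - 4\ir\pi\,\ell{1}(\eta)/\ell{1}'(0)$ of the logarithmic derivative of $\ell{1}$ yields a nonzero residue $4\ir\pi\,\eps_\alpha p_\alpha\,\ell{1}(\eta)/\ell{1}'(0)$ per line. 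Multiplying by the number $\prod_{\beta\neq\alpha}L_\beta$ of parallel lines in direction $\alpha$, summing over $\alpha$, and relabelling $\eps_\alpha p_\alpha\mapsto p_\alpha\in\mathbb{Z}$ in~\eqref{RootUnity} reproduces the momentum-like second term of~\eqref{energy}.

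The principal obstacle is bookkeeping these elliptic quasi-periodicities cleanly: one must verify that $a(u)$ is \emph{exactly} doubly periodic with periods $2$ and $2\tau$ (so the $S^z$ telescope closes without any residue), and that $g(u)$ transforms with the precise additive shift $-4\ir\pi\,\ell{1}(\eta)/\ell{1}'(0)$ under $u\to u+2\tau$ (so that the $p_\alpha$-dependent term acquires the correct coefficient). Both facts follow in a line from the standard quasi-periods of $\vartheta_{1}$ and $\vartheta_{2}$ recorded in Appendix~\ref{App;Theta}, but they carry the entire nontriviality of the theorem, since the bond-by-bond cancellation is otherwise a purely algebraic consequence of~\eqref{div;eq}.
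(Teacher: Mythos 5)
Your proposal follows the same route as the paper's proof: apply the bond identity \eqref{div;eq} to every edge, telescope the $S^z$ and $g$ contributions along each one--dimensional line, and invoke the exact periodicity of $a$ and the quasi-periodicity of $g$ under $u\to u+2k+2l\tau$ (the paper's Eq.~\eqref{id;proof}). Two points, however, need repair. First, before any telescoping you must verify that the product state closes consistently around each cycle, i.e.\ that $\psi^{(s)}(v+L_\alpha\eta)\propto\psi^{(s)}(v)$, which is precisely what \eqref{RootUnity} guarantees through $\bell{4}(v+2k+2l\tau)/\bell{1}(v+2k+2l\tau)=\bell{4}(v)/\bell{1}(v)$. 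Without this, the wrap-around bond joining site $L_\alpha$ to site $1$ in direction $\alpha$ does not satisfy the hypothesis of \eqref{div;eq} at all: the two local arguments on that bond differ by $(L_\alpha-1)\eps_\alpha\eta$, not by $\pm\eta$, so the identity cannot be applied there. You invoke the commensurability condition only to equate $a(\tilde u_{L_\alpha+1})=a(\tilde u_1)$, but its more basic role is to make the boundary bond legitimate in the first place; the paper makes this its opening step.

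Second, your expansion of the $b$-term carries a sign error when $\eps_\alpha=-1$. The identity \eqref{div;eq} produces $b(\eps_\alpha\tilde u_i)=g(\eta)+g(\eps_\alpha\tilde u_i)-g(\eps_\alpha\tilde u_i+\eta)$, not $g(\eta)+g(\tilde u_i)-g(\tilde u_i+\eps_\alpha\eta)$; since $g$ is odd, these differ by the sign of the non-constant part for $\eps_\alpha=-1$. The correct telescoping variable is $\eps_\alpha\tilde u_i$, which increases by exactly $+\eta$ per bond irrespective of $\eps_\alpha$, so each line contributes the residue $+4\ir\pi p_\alpha\,\ell{1}(\eta)/\ell{1}'(0)$ with no factor of $\eps_\alpha$, and the energy is genuinely independent of $\bm{\eps}$, as \eqref{energy} asserts. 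Your version instead yields $4\ir\pi\eps_\alpha p_\alpha\,\ell{1}(\eta)/\ell{1}'(0)$ per line, and the proposed ``relabelling $\eps_\alpha p_\alpha\mapsto p_\alpha$'' is not available: $p_\alpha$ is already pinned by \eqref{RootUnity} once $\eta$, $\tau$ and $L_\alpha$ are given, so relabelling it would change the stated eigenvalue. The fix is a one-line correction of the telescoping variable, but as written the eigenvalue computation does not close.
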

\begin{proof}
We see that the state $\psi^{(s)}(u)$ in \eqref{LocalVector} is entirely determined by the ratio $\frac{\bell{4}(u)}{\bell{1}(u)}$.
With the help of the following identity 
\begin{align}
&\frac{\bell{4}(u+2k+2l\tau)}{\bell{1}(u+2k+2l\tau)}\overset{(\ref{periodicity;1}, \ref{periodicity;2})}{=}\frac{\bell{4}(u)}{\bell{1}(u)},\quad k,l\in\mathbb{Z},
\end{align}
the elliptic commensurability condition
\eqref{RootUnity} ensures 
\[
\psi^{(s)}(u + L_\alpha\eta) \propto \psi^{(s)}(u),
\]
so that the tensor product state \eqref{psi1} is compatible with the periodic boundary condition.

Due to the quasi-periodicity of theta functions (see Eqs. (\ref{periodicity;1}) and (\ref{periodicity;2})), we can get these useful identities
\begin{align}
&a(u+2k+2l\tau)=a(u),\no\\
&g(u+2k+2l\tau)=g(u)-4\ir\pi l\,\frac{\ell{1}(\eta)}{\ell{1}'(0)},\quad k,l\in\mathbb{Z},\label{id;proof}
\end{align}
where $a(u)$ and $g(u)$ are defined in Eq. (\ref{def;ab}).

First, we consider the one-dimensional system. Acting the Hamiltonian \( H \) on the state \( \ket{\Psi^{(s)}(u,\eps)}\equiv\bigotimes_{k=1}^L\psi_k^{(s)}(u+\eps k\eta) \), we get 
\begin{align}
&\quad H\ket{\Psi^{(s)}(u,\eps)}\no\\
&=\sum_{n=1}^LH_{n,n+1}\bigotimes_{k=1}^L\psi_k^{(s)}(u_k^{(\eps)})\no\\
&\overset{(\ref{div;eq})}{=}\eps s\sum_{n=1}^L\left[a(u_n^{(\eps)})S_n^z-a(u_{n+1}^{(\eps)})S_{n+1}^z\right]\ket{\Psi^{(s)}(u,\eps)}\no\\
&\quad+s^2\sum_{n=1}^Lb(\eps u_n^{(\eps)})\ket{\Psi^{(s)}(u,\eps)}\no\\
&=\eps s\left[a(u_1^{(\eps)})S_1^z-a(u_{L+1}^{(\eps)})S_{1}^z\right]\ket{\Psi^{(s)}(u,\eps)}\no\\
&\quad+s^2\left[Lg(\eta)+g(\eps u_1^{(\eps)})-g(\eps u_{L+1}^{(\eps)})\right]\ket{\Psi^{(s)}(u,\eps)},\no
\end{align}
where $u^{(\eps)}_{n}=u+\eps n\eta$. Using (\ref{RootUnity}) and (\ref{id;proof}), we have
\begin{align}
\begin{aligned}
a(u_1^{(\eps)})-a(u_{L+1}^{(\eps)})&=0, \\
g(\eps u_1^{(\eps)})-g(\eps u_{L+1}^{(\eps)})&=4\ir\pi p\,\frac{\ell{1}(\eta)}{\ell{1}'(0)}.
\end{aligned}
\end{align}
Thus, 
\begin{align}
&H\ket{\Psi^{(s)}(u,\eps)}\no\\&{=}\,s^2\left[L\,\frac{\ell{1}'(\eta)}{\ell{1}'(0)}+4\ir\pi p\,\frac{\ell{1}(\eta)}{\ell{1}'(0)}\right]\ket{\Psi^{(s)}(u,\eps)}, 
\end{align}
 the theorem is proved for the one-dimensional case.

Analogously, our proof can be generalized to higher-dimensional systems. By acting the Hamiltonian \( H \) on the state \( \ket{\Psi^{(s)}(u,\bm{\eps})} \) and applying Eq.~(\ref{div;eq}) to each nearest-neighbor bond, we find that the non-eigen contributions take the same form as in the one-dimensional case. These terms cancel out across the lattice due to the helical structure and the periodicity condition \eqref{RootUnity}. This confirms that \( \ket{\Psi^{(s)}(u,\bm{\eps})} \) is indeed an eigenstate of Hamiltonian \( H \) and the corresponding eigenvalue is given by the sum of all coefficients preceding the eigenterms, which equals Eq. (\ref{energy}). 

\end{proof}

As shown in Figs.~\ref{Fig:SHS} and \ref{Fig:SHS2}(b), the spin helix state $\ket{\Psi^{(s)}(u,\bm{\eps})}$ exhibits a spatially modulated spin texture along each lattice direction, where the polar and azimuthal angles $\gamma_j(u)$ and $\beta_j(u)$ vary systematically with site index $j$, forming the characteristic helical structure. While the state $\ket{\Psi^{(s)}(u,\bm{\eps})}$ exhibits overall periodicity, its local properties, such as spin components, are modulated by the elliptic functions, leading to spatially non-uniform behavior rather than strict uniformity. Fig.~\ref{Fig:SHS2}(a) shows that the spin helix structure can extend into two spatial directions.

\begin{figure}[htbp]
\includegraphics[width=0.32\textwidth]{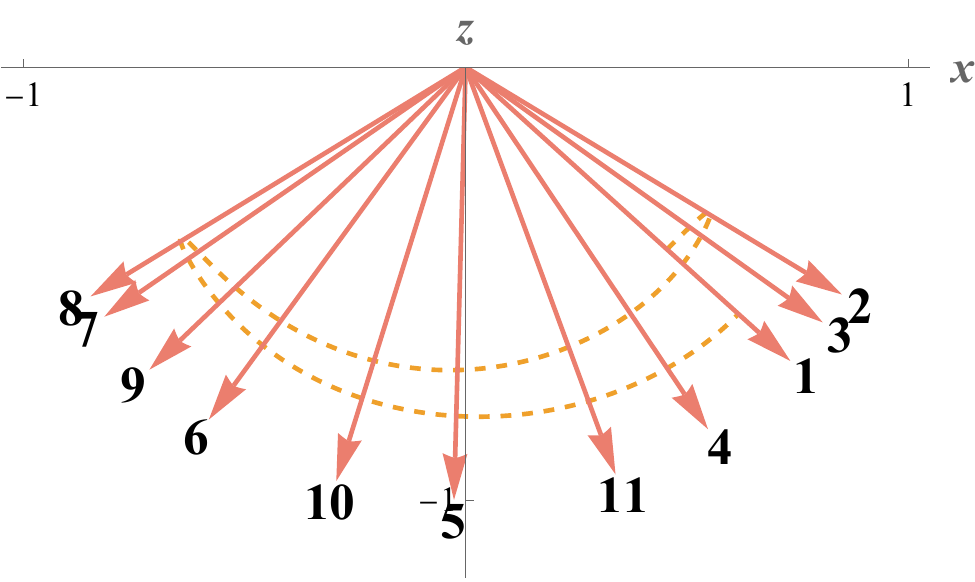}\\[4pt]
\includegraphics[width=0.32\textwidth]{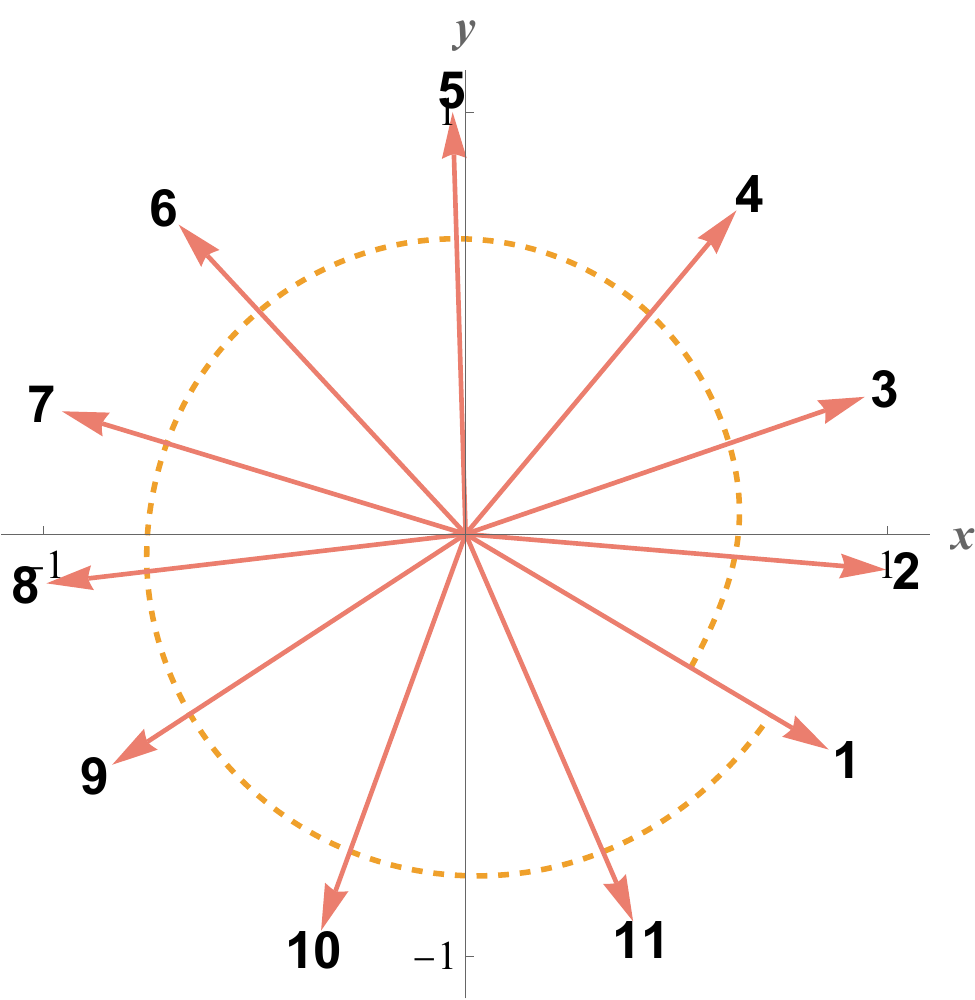}
\caption{Visualization of the spin helix structure in $\bigotimes_{n=1}^L\psi_n^{(s)}(u-\eta+n\eta)$ for the 1D spin-$s$ XYZ chain with $L=11,\eta=\frac{2}{11},\,\tau=0.8\ir$ ($J_x=1.1128,\,J_y=0.9184,\,J_z=0.8348$). Upper panel: $\,u=0.28$. Lower panel: $u=0.28+\frac{\tau}{2}$. The arrows indicate local spin directions, while the numbers denote lattice site coordinates. Upper panel: All spins are confined to the $xz$-plane, exhibiting periodic oscillations with lattice site variation. Lower panel: All spins are restricted to the $xy$-plane, forming a winding structure. In generic case, both $\gamma$ and $\beta$ vary with the lattice coordinate. In this figure, we constrain all spins to a fixed plane to demonstrate the helix structure more clearly.}\label{Fig:SHS}
\end{figure}

For finite systems, Eq. \eqref{RootUnity} implies the parameter $\eta$ can only take several discrete values 
\begin{equation}
    \eta=\frac{2 p_\alpha\tau  + 2 q_\alpha}{L_\alpha},\quad p_\alpha,q_\alpha\in\mathbb{Z}.
\end{equation}
In the thermodynamic limit, these discrete points become densely distributed, allowing $\eta$ to approach any value within the complex plane. This demonstrates the universality of spin helix eigenstates in Heisenberg models. 

Owing to the factorized structure of the state, its spin configuration admits a direct classical interpretation in the large-$s$ limit, where the local spin orientation approaches a classical vector field. 

The state $\ket{\Psi^{(s)}(u,\bm{\eps})}$ is an exact nonthermal eigenstate that exists in the XYZ model with higher dimension or higher spin, which are non-integrable \cite{shiraishiXYXYZNIP}. As such, $\ket{\Psi^{(s)}(u,\bm{\eps})}$ constitutes a QMBS. It depends on a phase parameter $u$, yet the corresponding energy remains independent of $u$, implying that these states form a degenerate manifold of exact eigenstates. Before analyzing the analytic structure of the integrable subspace spanned by the spin helix states, it is instructive to first investigate simpler models such as the XXZ and XY chains, where similar states arise with clearer structure.  


\begin{figure}\includegraphics[width=0.40\textwidth]{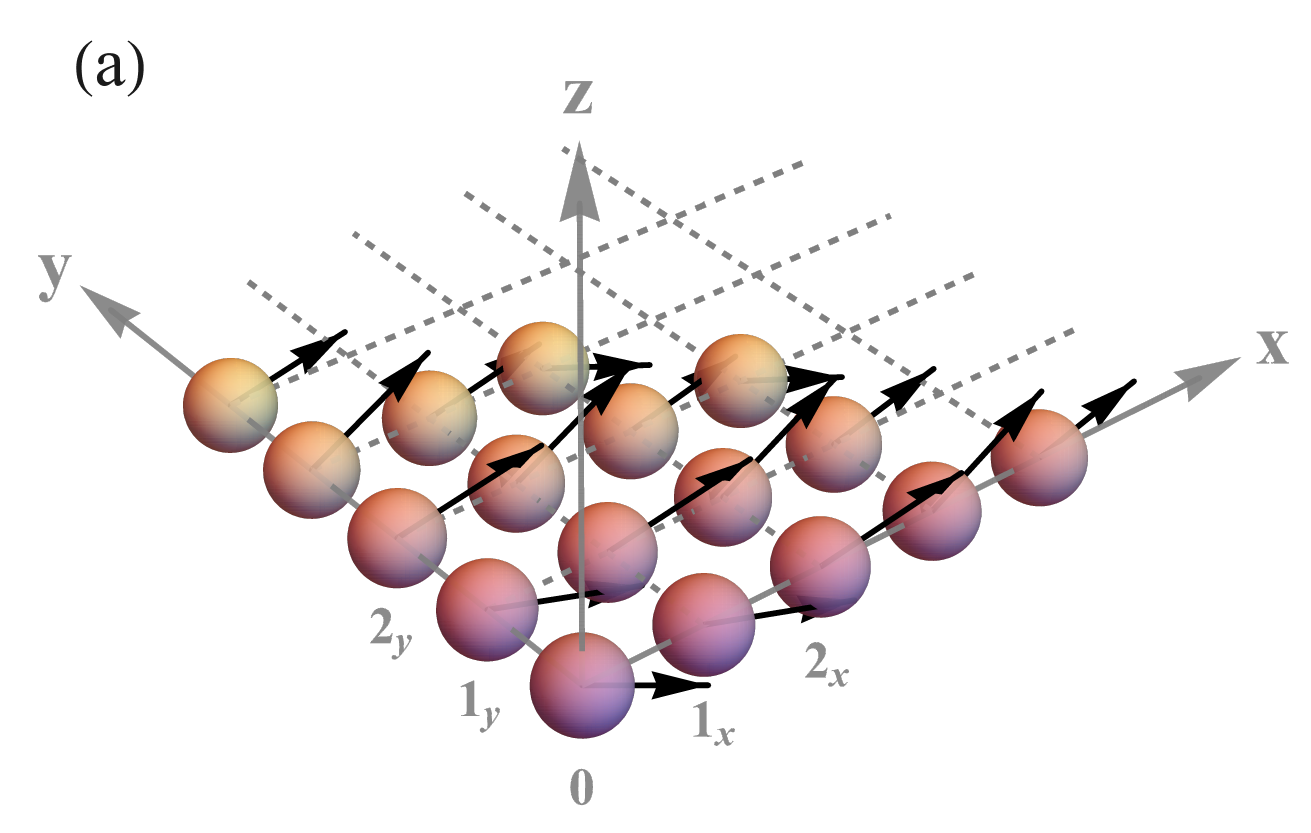}\\
\includegraphics[width=0.47\textwidth]{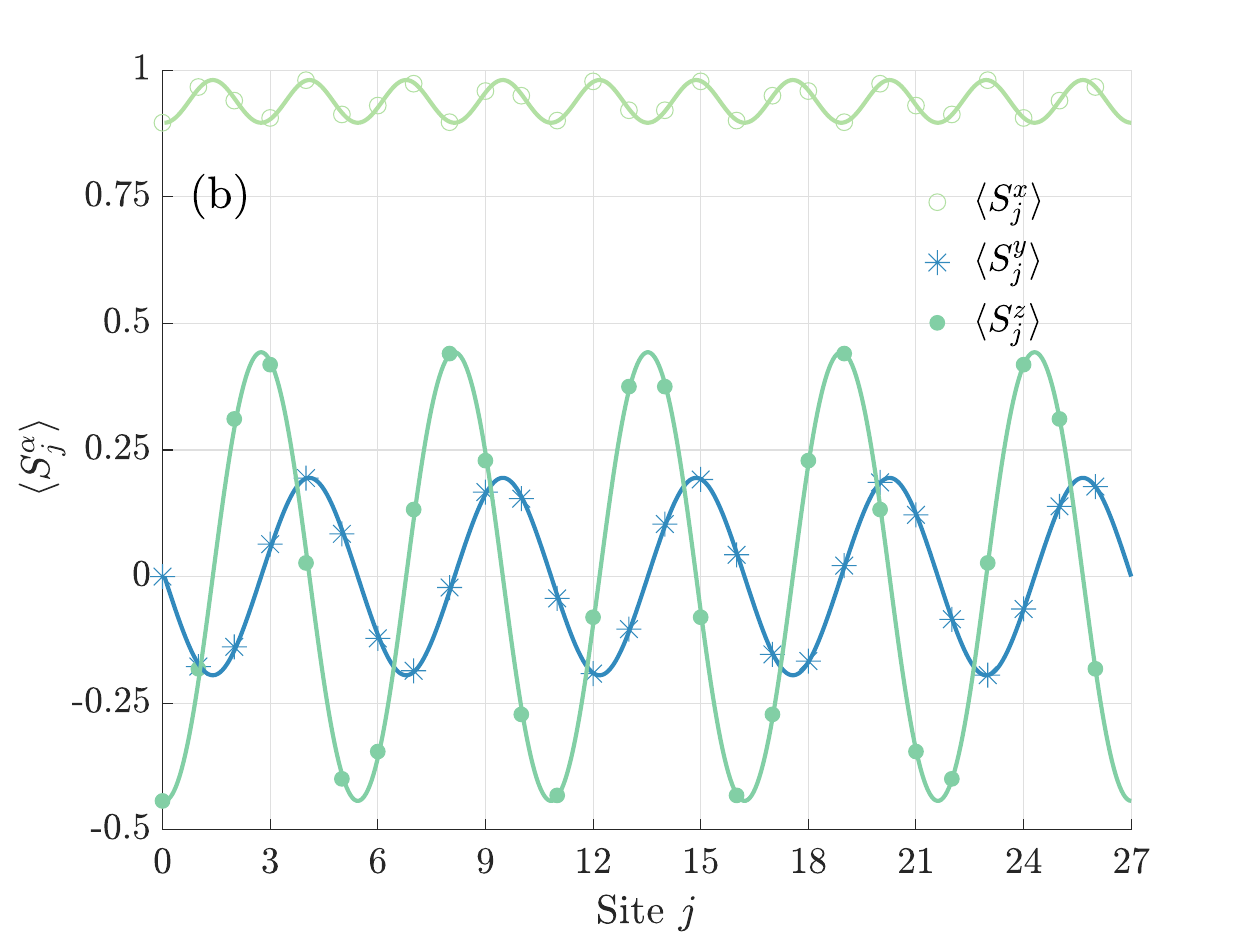}
 \caption{(a) Local spin configuration of a spin helix eigenstate in the two-dimensional XYZ model, shown for a subset of a \( L\times L\) lattice. The arrows represent spin orientations at selected lattice sites. (b) Local spin expectation values \( S^\alpha_j \) (\( \alpha = x, y, z \)) in the spin helix eigenstate are plotted for site indices \( j = 0, 1_x, 2_x, \ldots, (L-1)_x \). 
 Local spin expectation values \( L = 27 \). Common Parameters: \( s = 1 \), \( u = 0.4 \), \( \tau = 0.7\ir \) and \( \eta = 10\tau/L \) ($J_x=0.5353,\ J_y=1.3020,\ J_z= 1.4046$). }\label{Fig:SHS2}
\end{figure}

\section{Spin helix eigenstate in XXZ model and XY model}\label{sec:XXZ&XY}

The XYZ model represents the most general form of the Heisenberg model. In specific parameter regimes, it reduces to partially anisotropic models, the XXZ and XY models.
\subsection{Spin-$s$ XXZ model}

In the trigonometric limit $\tau \to +\ir\infty$, the XYZ Hamiltonian reduces to the XXZ model with
\begin{align}
H_{i,j}^{\rm XXZ}={S}_i^xS_{j}^x+S_i^yS_{j}^y+\cos(\pi \eta)S_i^zS_{j}^z.\label{HijXXZ}
\end{align}
In this limit, some functions in the previous section degenerate to \cite{Zhang2024}:
\begin{align}
\begin{aligned}
&\lim_{\tau \to +\ir\infty} \frac{\bell4(u)}{\bell{1}(u)} = \eE^{\ir \pi v},  \\
&\lim_{\tau \to +\ir\infty} b(u) = \cos(\pi\eta),  \\
&\lim_{\tau \to +\ir\infty} a(u) = -\ir \sin(\pi\eta),
\end{aligned}\label{Degeneration}
\end{align}
where we introduce the shifted variable $u = v + \frac{1+\tau}{2}$ for convenience. 

Correspondingly, the elliptic local spinor $\psi^{(s)}(u)$ degenerates into the following trigonometric spin-coherent state
\begin{align}
\bar{\psi}^{(s)}(u) = 
\frac{1}{\bar{\mathcal{N}}(u)} \sum_{n=0}^{2s} \kappa_n\, \eE^{\ir\pi n u} \ket{s-n}, \label{eq:localvector_xxz}
\end{align}
where the normalization factor is
\begin{equation}
\bar{\mathcal{N}}(u) = \left[1 + \eE^{-2\pi\imag(u)}\right]^s.
\end{equation}

From Eqs. (\ref{Degeneration}) and (\ref{eq:localvector_xxz}), we can get the divergence condition for the XXZ model 
\begin{align}
\hspace{-0.15cm}H_{i,j}^{\rm XXZ} \, &\bar{\psi}^{(s)}_i(u)\bar{\psi}^{(s)}_j(u\pm\eta)\!=\! s^2 \cos(\pi \eta) \, \bar{\psi}^{(s)}_i(u)\bar{\psi}^{(s)}_j(u\pm\eta) \nonumber \\
&\quad \mp \ir s \sin(\pi \eta)(S_i^z - S_j^z) \bar{\psi}^{(s)}_i(u)\bar{\psi}^{(s)}_j(u\pm\eta). \label{eq:local_interaction_xxz}
\end{align}
For the XXZ model, we can construct the following spin helix state
\begin{align}
\ket{\bar{\Psi}^{(s)}(u,\bm{\epsilon})} = \bigotimes_j \bar{\psi}^{(s)}_j\left(u + \eta\, \bm{\epsilon} \cdot \bm{n}_j \right), \qquad u \in \mathbb{C}.\label{SHS:XXZ}
\end{align}
By taking the limit $\tau \to +\ir\infty$ in the context of Theorem \ref{Thm1}, we directly derive Theorem \ref{Thm2}.
\begin{theorem}\label{Thm2}
Under the root of unity condition 
\begin{equation}
L_\alpha \eta = 2q_\alpha, \quad q_\alpha \in \mathbb{Z}, \quad \alpha = 1,\dots,d. \label{eq:root_of_unity}
\end{equation}
The state $\ket{\bar{\Psi}^{(s)}(u,\bm{\epsilon})}$ proposed in Eq. (\ref{SHS:XXZ}) is an exact eigenstate of the XXZ Hamiltonian. The corresponding energy is given by
\begin{equation}
E = d s^2 \cos(\pi \eta)\, V.\label{Energy:XXZ}
\end{equation}
\end{theorem}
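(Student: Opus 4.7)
The plan is to derive Theorem~\ref{Thm2} as the trigonometric limit $\tau\to+\ir\infty$ of Theorem~\ref{Thm1}, complemented by a self-contained telescoping verification based directly on the XXZ divergence relation~(\ref{eq:local_interaction_xxz}).

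First, I would match each ingredient of Theorem~\ref{Thm1} to its XXZ counterpart. With the shift $u=v+(1+\tau)/2$, the ratio $\bell{4}(u)/\bell{1}(u)\to\eE^{\ir\pi v}$, so the elliptic spinor $\psi^{(s)}(u)$ in~(\ref{LocalVector}) degenerates to $\bar\psi^{(s)}(v)$ in~(\ref{eq:localvector_xxz}); the exchange couplings~(\ref{Jxyz;1}) reduce to $J_x,J_y\to 1$ and $J_z\to\cos(\pi\eta)$, reproducing $H^{\rm XXZ}_{i,j}$ in~(\ref{HijXXZ}); and $a(u),b(u)$ converge to the constants in~(\ref{Degeneration}), so the XYZ divergence condition~(\ref{div;eq}) collapses term-by-term to~(\ref{eq:local_interaction_xxz}). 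Hence $\ket{\bar\Psi^{(s)}(u,\bm{\epsilon})}$ is precisely the $\tau\to+\ir\infty$ limit of $\ket{\Psi^{(s)}(u,\bm{\eps})}$.

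Next, I would reduce the commensurability condition. Since $\tau$ diverges while $\eta$ is held finite, only $p_\alpha=0$ survives in~(\ref{RootUnity}), leaving $L_\alpha\eta=2q_\alpha$ as in~(\ref{eq:root_of_unity}); under this condition the phases $\eE^{\ir\pi n L_\alpha\eta}$ equal one for every $n$, so $\bar\psi^{(s)}(u+L_\alpha\eta)\propto\bar\psi^{(s)}(u)$ and the product state is compatible with periodic boundaries. Setting $p_\alpha=0$ in~(\ref{energy}) kills the momentum term, and the remaining prefactor $\ell{1}'(\eta)/\ell{1}'(0)$ tends to $\cos(\pi\eta)$ via the small-$\eE^{\ir\pi\tau}$ expansion of $\vartheta_{1}$, yielding $E=ds^2\cos(\pi\eta)V$, which matches~(\ref{Energy:XXZ}).

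For a self-contained verification, I would act $H^{\rm XXZ}$ bond by bond on $\ket{\bar\Psi^{(s)}(u,\bm{\epsilon})}$ using~(\ref{eq:local_interaction_xxz}): each bond contributes an eigenpiece $s^2\cos(\pi\eta)$ plus a non-eigen piece proportional to $S_i^z-S_j^z$. Along each spatial direction the non-eigen contributions form a telescoping sum whose boundary term vanishes thanks to the periodicity enforced by~(\ref{eq:root_of_unity}), reproducing~(\ref{Energy:XXZ}) exactly as in the one-dimensional portion of the proof of Theorem~\ref{Thm1}. The main, and essentially only, subtle point, and hence the potential obstacle, is verifying that~(\ref{eq:root_of_unity}) is both sufficient and (for generic $\eta$) necessary for closure of the boundary, and that no XXZ-only solutions are missed by the limit procedure; the direct telescoping check settles both concerns without requiring any identity beyond those already established for the XYZ case.
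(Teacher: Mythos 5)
Your proposal is correct and follows essentially the same route as the paper, which proves Theorem~\ref{Thm2} simply by taking the limit $\tau\to+\ir\infty$ of Theorem~\ref{Thm1} after recording the degenerations in Eqs.~(\ref{Degeneration})--(\ref{eq:local_interaction_xxz}); your reduction of the commensurability condition to $p_\alpha=0$ and the limit $\ell{1}'(\eta)/\ell{1}'(0)\to\cos(\pi\eta)$ are exactly the intended steps. The additional direct telescoping verification via Eq.~(\ref{eq:local_interaction_xxz}) is a harmless (and sound) redundancy mirroring the one-dimensional argument in the proof of Theorem~\ref{Thm1}.
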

It should be noted that the spin helix eigenstate of the Hamiltonian only exists in the easy plane regime ($\eta$ is real).
The state $\ket{\bar{\Psi}^{(s)}(u,\bm{\epsilon})}$ now exhibits a ``perfect"  transverse helix structure with 
\begin{align}
&\gamma(u\pm \eta)=\gamma(u)=2\arctan(\eE^{-\pi \imag(u)}),\no\\
&\beta(u\pm\eta)=\beta(u)\pm \pi\eta,\quad \beta(u)=\pi u.
\end{align}
In the SHS, all local spins are now confined to the same plane (with fixed polar angle $\gamma$), while the azimuthal angles $\beta$ vary linearly with the lattice site. The spin helix eigenstate of the spin-$s$ XXZ model has been discussed in Ref. \cite{Ketterle2}, which is a remarkable generalization of the phantom Bethe state discovered in the one-dimensional integrable spin-$\frac{1}{2}$ XXZ model \cite{PhantomShort}. 

It should be remarked that the spin helix eigenstate $\ket{{\Psi}^{(s)}(u,\bm{\epsilon})}$ has another trivial degenerations in the limit $\tau\to+\ir\infty$
\begin{align}
\lim_{\substack{\tau \to +\ir \infty \\ u/\tau \to 1}} \ket{{\Psi}^{(s)}(u,\bm{\epsilon})} &\propto\ket{\Omega}=
\ket{s}\otimes \cdots\otimes \ket{s}, \\
\lim_{\substack{\tau \to +\ir \infty \\ u/\tau \to 0}} \ket{{\Psi}^{(s)}(u,\bm{\epsilon})} &\propto\ket{\bar{\Omega}}=
\ket{-s}\otimes \cdots\otimes \ket{-s},
\end{align}
which is consistent with the fact that the highest-weight state $\ket{\Omega}$ and the lowest-weight state $\ket{\bar\Omega}$ remain exact eigenstates of the XXZ Hamiltonian.

Unlike the XYZ model, the XXZ Hamiltonian possesses a $U(1)$ symmetry and commutes with the total $z$-component magnetization operator $\sum_{j}S_{j}^z$. Given the high degeneracy of the energy level $E = d s^2 \cos(\pi \eta)\, V$, a natural question arises: can we derive common eigenstates of $\sum_{j}S_{j}^z$ and $H$ from these exact spin helix eigenstates?


We observe that the spin helix eigenstate $\ket{\bar\Psi^{(s)}(u,\bm{\eps})}$ in Eq. (\ref{SHS:XXZ}) can be expanded as a linear combination of a set of $u$-independent vectors, as demonstrated below
\begin{align}
\ket{\bar\Psi^{(s)}(u,\bm{\eps})}=\frac{1}{\prod_j\bar{\mathcal{N}}(u+\eta\,\bm{\epsilon} \cdot \bm{n}_j )}\sum_{n=0}^{2sV}\eE^{\ir \pi  nu}\ket{\Psi_{n,\bm{\eps}}^{(s)}},
\end{align}
where $\left\{\ket{\Psi^{(s)}_{n,\bm{\eps}}}\right\}$ are the eigenstates of $\sum_{j}S_{j}^z$
\begin{align}
\sum_{j}S_{j}^z\ket{\Psi^{(s)}_{n,\bm{\eps}}}=(sV-n)\ket{\Psi^{(s)}_{n,\bm{\eps}}},\,\,n=0,\ldots,2sV.
\end{align}
The state $\ket{\Psi^{(s)}_{n,\bm{\eps}}}$ is generated by acting the nonlocal chiral ladder operator on the vacuum state $\ket{\Omega}$, as follows
\begin{align}
&\ket{\Psi_{n,\bm{\epsilon}}^{(s)}} = \frac{1}{n! }\left(J_{\bm{\epsilon}}^-\right)^n \ket{\Omega}, \label{tower state1}\\
&J_{\bm{\epsilon}}^\pm = \sum_j \exp\left(\ir\pi\eta\,\bm{\eps}\cdot\bm{n}_j\right)S^\pm_{j}.
\end{align} Here $\bm{\epsilon}$ specifies distinct chiral sectors. 

\begin{theorem}\label{Thm3}
The state $\ket{\bar\Psi^{(s)}(u,\bm{\eps})}$ remains an eigenstate of the Hamiltonian for arbitrary $u$ under condition (\ref{eq:root_of_unity}). It follows directly that all vectors in the set $\left\{\ket{\Psi^{(s)}_{n,\bm{\eps}}}\right\}$ are common eigenstates of $H$ and $\sum_{j}S_{j}^z$, and share the same energy eigenvalue $E=d s^2 \cos(\pi \eta)\, V.$
\end{theorem}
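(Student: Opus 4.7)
The plan is to bootstrap Theorem~\ref{Thm2} directly, using only the $U(1)$ symmetry of the XXZ Hamiltonian together with the observation that the expansion coefficients $\ket{\Psi^{(s)}_{n,\bm{\eps}}}$ live in distinct magnetization sectors. The first step is to notice that the root-of-unity condition \eqref{eq:root_of_unity} in Theorem~\ref{Thm2} imposes no restriction on the phase parameter $u$: the SHS $\ket{\bar\Psi^{(s)}(u,\bm{\eps})}$ is an eigenstate of $H$ with the $u$-independent eigenvalue $E = d s^2 \cos(\pi\eta)\,V$ for every $u \in \CC$.

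Next I would rewrite the given expansion as
\begin{equation}
\prod_j\bar{\mathcal{N}}\!\left(u+\eta\,\bm{\eps} \cdot \bm{n}_j\right)\ket{\bar\Psi^{(s)}(u,\bm{\eps})}=\sum_{n=0}^{2sV}\eE^{\ir\pi n u}\ket{\Psi^{(s)}_{n,\bm{\eps}}},
\end{equation}
in which the scalar prefactor is strictly positive for all $u \in \CC$. Applying $H-E$ to both sides and invoking Theorem~\ref{Thm2} on the left yields
\begin{equation}
\sum_{n=0}^{2sV}\eE^{\ir\pi n u}(H-E)\ket{\Psi^{(s)}_{n,\bm{\eps}}}=0.
\end{equation}

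The third step exploits the $U(1)$ symmetry $[H,\sum_j S^z_j]=0$. Because $J^-_{\bm{\eps}}$ lowers the total magnetization by one unit, each $\ket{\Psi^{(s)}_{n,\bm{\eps}}}$ lives in the sector with $\sum_j S^z_j = sV-n$, and $(H-E)\ket{\Psi^{(s)}_{n,\bm{\eps}}}$ remains in that same sector. Since sectors with distinct magnetization are mutually orthogonal, the vanishing of the above sum forces each summand to vanish separately; as $\eE^{\ir\pi n u}\neq 0$ we conclude $(H-E)\ket{\Psi^{(s)}_{n,\bm{\eps}}}=0$ for every $n=0,\dots,2sV$, which together with the manifest identity $\sum_j S^z_j\ket{\Psi^{(s)}_{n,\bm{\eps}}}=(sV-n)\ket{\Psi^{(s)}_{n,\bm{\eps}}}$ completes the argument.

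I do not anticipate a genuine obstacle; the argument is effectively a projection onto definite magnetization sectors, and the $u$-independence of the eigenvalue makes even a Fourier-type extraction across different $u$ unnecessary. The only point requiring care is checking that each $\ket{\Psi^{(s)}_{n,\bm{\eps}}}$ is nonzero, so that the statement is substantive throughout the full tower; this is a routine consequence of the spin-coherent structure encoded in Eq.~\eqref{eq:localvector_xxz}, whose binomial weights $\kappa_n$ ensure that $(J^-_{\bm{\eps}})^n\ket{\Omega}$ does not accidentally collapse before the lowest-weight sector is reached.
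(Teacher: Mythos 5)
Your proposal is correct. The paper states Theorem~\ref{Thm3} as following ``directly'' from the $u$-independence of the eigenvalue, i.e.\ the implicit argument is a Fourier-type extraction: since $\ket{\bar\Psi^{(s)}(u,\bm{\eps})}$ is an eigenstate with the same $E$ for every $u$, and the coefficients $\eE^{\ir\pi n u}$ are linearly independent as functions of $u$, each component $\ket{\Psi^{(s)}_{n,\bm{\eps}}}$ must separately satisfy $(H-E)\ket{\Psi^{(s)}_{n,\bm{\eps}}}=0$. You instead fix a single $u$ and use that $[H,\sum_j S^z_j]=0$, so that the vectors $(H-E)\ket{\Psi^{(s)}_{n,\bm{\eps}}}$ lie in mutually orthogonal magnetization sectors and a vanishing sum forces each term to vanish. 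Both routes are valid; yours is marginally more economical in that it does not need the ``arbitrary $u$'' premise at all (one value of $u$ suffices), whereas the paper's extraction-over-$u$ argument would survive even without the $U(1)$ symmetry, which is why the analogous decomposition in Section~\ref{sec:esXYZ} for the XYZ model (where $\sum_j S^z_j$ is not conserved) must proceed by varying $u$ rather than by sector orthogonality. Your closing remark that the nonvanishing of each $\ket{\Psi^{(s)}_{n,\bm{\eps}}}$ follows from the strictly positive binomial weights $\kappa_n$ in Eq.~\eqref{eq:localvector_xxz} is a sensible, if strictly optional, addition.
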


The state $\ket{\Psi_{n,\bm{\epsilon}}^{(s)}}$ can be considered as a collective magnon excitation, where all spin flips are coherently modulated across the lattice with a momentum $\pi\eta$ or $-\pi\eta $. 
The states $\ket{\Psi^{(s)}_{n,\bm{\eps}}}$ with different values of $n$ are mutually orthogonal. For fixed $n$, the states with different $\bm{\eps}$ are generally non-orthogonal but can be shown to be linearly independent. Consequently, the set of states $\left\{\ket{\bar\Psi^{(s)}(u,\bm{\eps})}\right\}$ spans a degenerate invariant subspace of dimension
\[
2^d(2sV+1)-2(2^d-1) = 2^d(2sV - 1) + 2. 
\]

For the one-dimensional spin-$\frac{1}{2}$ XXZ model, the observed degeneracy is known to arise from the enhanced quantum group symmetry $U_q(\mathfrak{sl}_2)$ when the deformation parameter $q$ is a root of unity \cite{pasquier1990common,deguchi2001sl2}. Given the structural similarity of the exact states constructed here, it is natural to speculate that the invariant subspace identified in our higher-dimensional spin-$s$ model may originate from a similar algebraic mechanism. However, a precise understanding of the underlying symmetry structure in this generalized setting remains an open problem in mathematical physics. 

In the limit $\eta\to0$, the XXZ model reduces to the isotropic XXX model. Consequently, all local states in $\ket{\bar\Psi^{(s)}(u,\bm{\eps})}$ align with the same spin direction, resulting in the absence of chirality. In the isotropic case, the energy level $E=d s^2 V$ is $2sV+1$-fold degenerate, with the degeneracy originating from the $SU(2)$ symmetry of the system. 

The states defined in Eq.~\eqref{tower state1} provide a tractable framework for computing the exact bipartite entanglement entropy, defined as 
\begin{equation}
    S_A = -\mathrm{Tr}(\rho_A \ln \rho_A),
\end{equation}
where \(\rho_A\) is the reduced density matrix of a subsystem \(A\) with volume \(V_A\). The bipartite entanglement entropy of the tower state \(\ket{\Psi_{n,\bm{\epsilon}}^{(s)}}\) takes the form \cite{XY-1}
\begin{equation}
    S_A = -\sum_{j=1}^{2s V_A} \frac{\binom{2s V_A}{j} \binom{2s V - 2s V_A}{n - j}}{\binom{2s V}{n}} \ln \left( \frac{\binom{2s V_A}{j} \binom{2s V - 2s V_A}{n - j}}{\binom{2s V}{n}} \right),
\end{equation}
where we restrict to the case \(V_A \leq n/2\).

In the special case \(n = V_A = s V\), one can analytically evaluate the asymptotic behavior of \(S_A\) in the thermodynamic limit. This yields
\begin{equation}
    S_A \simeq \frac{1}{2} \ln \left( \frac{s \pi V}{4} \right) + \frac12,
\end{equation}
demonstrating that the entanglement entropy grows logarithmically with the system size. This sub-volume scaling of entanglement entropy is markedly different from the volume-law behavior of thermal states and thus provides strong evidence that these exact states are QMBSs.

\subsection{Spin-$s$ XY model}

When \(\eta = \frac{1}{2}\), the exchange coefficients are given by
\begin{align}
J_x=\frac{\ell{3}(0)}{\ell{4}(0)},\,\,J_y=\frac{\ell{4}(0)}{\ell{3}(0)},\,\, J_z=0. \label{Ham:XY1}
\end{align}
This gives an anisotropic XY model with zero longitudinal coupling. In this case, Eq.~\eqref{RootUnity} becomes
\[
L_\alpha = 4 p_\alpha \tau + 4 q_\alpha. \quad \alpha=1,\dots,d,\quad p_\alpha,q_\alpha\in\mathbb{Z},
\]
Since \(\tau\) has non-zero imaginary part, this equation holds only when \(p_\alpha = 0\), which implies \(L_\alpha = 4q_\alpha\). In other words, the spin helix eigenstates exist in the XY model (\ref{Ham:XY1}) for arbitrary spin $s$, when the site number in each direction is a multiple of 4. The spin helix eigenstate now reads 
\begin{align}
\ket{\tilde \Psi^{(s)}(u,\bm{\eps})}=\bigotimes_{j}\psi^{(s)}_{j}\left(u+\tfrac12\,\bm{\eps}\cdot\bm{n}_j\right),\quad u\in\mathbb{C}. \label{psiXY}
\end{align}
The state (\ref{psiXY}) is a tensor product state composed of four local states: $\psi^{(s)}_{j}(u)$, $\psi^{(s)}_{j}(u+\tfrac12)$, $\psi^{(s)}_{j}(u+1)$ and $\psi^{(s)}_{j}(u+\tfrac32)$.
From Eq. \eqref{energy}, we can prove that the state \eqref{psiXY} is a zero-energy eigenstate.

It should be noted that the SHS in the XY model is still modulated by the theta function, and is significantly different from its XXZ counterpart. To the best of our knowledge, no simpler expression of Eq.~\eqref{psiXY} is available in generic case. However, for specific values of $u$, the SHS can be considerably simplified. Let $u = \frac{k + \tau}{2}$, one gets
\begin{align}
\frac{\bell{4}(\frac{\tau + k + 1}{2})}{\bell{1}(\frac{\tau + k + 1}{2})} = \ir^k.
\end{align}
It follows that
\begin{align}
\psi^{(s)}\!\left(\tfrac{1+\tau}{2}\right) &\propto \ket{s}_x, &
\psi^{(s)}\!\left(\tfrac{2+\tau}{2}\right) &\propto \ket{s}_y, \no\\
\psi^{(s)}\!\left(\tfrac{3+\tau}{2}\right) &\propto \ket{-s}_x, &
\psi^{(s)}\!\left(\tfrac{4+\tau}{2}\right) &\propto \ket{-s}_y,
\end{align}
where $\ket{m}_x$ and $\ket{m}_y$ are eigenstates of $S^x$ and $S^y$, respectively:
\[
S^x\ket{m}_x = m\ket{m}_x,\quad S^y\ket{m}_y = m\ket{m}_y.
\]
Therefore, a family of spin helix eigenstates can be constructed from the four fundamental local states $\ket{\pm s}_x$, $\ket{\pm s}_y$, exhibiting the following properties:
(1) the first local state is chosen among the four fundamental states.
(2) for each spatial direction, the spin sequence evolves through either a counter-clockwise cycle in the $xy$-plane $\ket{s}_x\to\ket{s}_y\to\ket{-s}_x\to\ket{-s}_y\to \ket{s}_x$ or a clockwise cycle $\ket{s}_x\to\ket{-s}_y\to\ket{-s}_x\to\ket{s}_y\to \ket{s}_x$. An example of these SHSs is visualized in Fig. \ref{Fig:XY}.

\begin{figure}[htbp]
\centering
\begin{tikzpicture}[scale=1.5]
\foreach \x in {0,...,4} {
\draw[thick,gray,dashed] (\x,-0.5) -- (\x,4.5);
\draw[thick,gray,dashed] (-0.5,\x) -- (4.5,\x);
    }
\node at (0-0.3,0-0.3) {$\ket{s}_x$};
\node at (1-0.3,0-0.3) {$\ket{s}_y$};
\node at (2-0.3,0-0.3) {$\ket{-s}_x$};
\node at (3-0.3,0-0.3) {$\ket{-s}_y$};
\node at (4-0.3,0-0.3) {$\ket{s}_x$};

\node at (0-0.3,1-0.3) {$\ket{-s}_y$};
\node at (1-0.3,1-0.3) {$\ket{s}_x$};
\node at (2-0.3,1-0.3) {$\ket{s}_y$};
\node at (3-0.3,1-0.3) {$\ket{-s}_x$};
\node at (4-0.3,1-0.3) {$\ket{-s}_y$};

\node at (0-0.3,2-0.3) {$\ket{-s}_x$};
\node at (1-0.3,2-0.3) {$\ket{-s}_y$};
\node at (2-0.3,2-0.3) {$\ket{s}_x$};
\node at (3-0.3,2-0.3) {$\ket{s}_y$};
\node at (4-0.3,2-0.3) {$\ket{-s}_x$};

\node at (0-0.3,3-0.3) {$\ket{s}_y$};
\node at (1-0.3,3-0.3) {$\ket{-s}_x$};
\node at (2-0.3,3-0.3) {$\ket{-s}_y$};
\node at (3-0.3,3-0.3) {$\ket{s}_x$};
\node at (4-0.3,3-0.3) {$\ket{s}_y$};

\node at (0-0.3,4-0.3) {$\ket{s}_x$};
\node at (1-0.3,4-0.3) {$\ket{s}_y$};
\node at (2-0.3,4-0.3) {$\ket{-s}_x$};
\node at (3-0.3,4-0.3) {$\ket{-s}_y$};
\node at (4-0.3,4-0.3) {$\ket{s}_x$};

\foreach \x in {0,...,4} {
\foreach \y in {0,...,4} {
\fill[black] (\x,\y) circle (2pt);}}
\end{tikzpicture}
\caption{A 2$d$ spin helix structure of the spin-$s$ XY model, where each local spin is polarized along either the $\pm x$ or $\pm y$ directions, forming a chiral winding configuration. }\label{Fig:XY}
\end{figure}
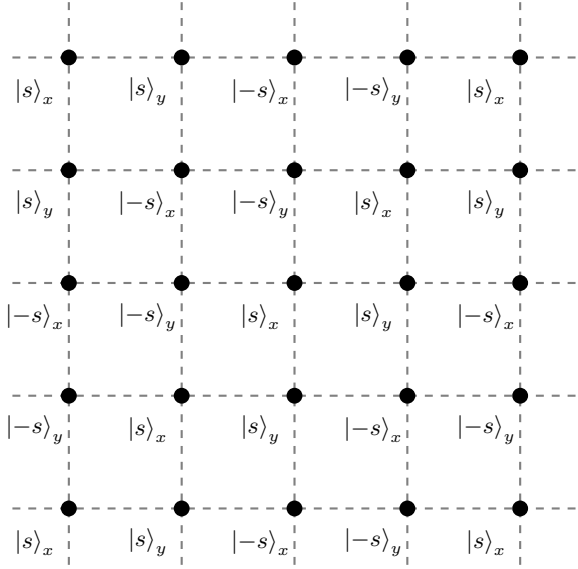

When $\eta=\frac12$, the coupling constants in Eq. (\ref{Ham:XY1}) satisfy $J_x/J_y>0$. However, if we set $\eta=\frac12-\tau$, the XYZ model reduces to a different XY model with
\begin{align}
J_x=\eE^{-\ir\pi\tau}\frac{\ell{3}(0)}{\ell{4}(0)},\,\,J_y=-\eE^{-\ir\pi\tau}\frac{\ell{4}(0)}{\ell{3}(0)},\,\, J_z=0. \label{Ham:XY2}
\end{align}
In this regime, the spin helix eigenstate (\ref{psiXY})  remains a zero-energy eigenstate of the Hamiltonian (\ref{Ham:XY2}) under the replacement $\frac12\to\frac12-\tau$, provided the number of lattice sites in each spatial direction is a multiple of 4.

It is necessary to clarify that the states presented in Eq. (\ref{psiXY}) do not belong to the following $SU(2)$-protected exact eigenstates previously identified in the spin-1 XY model \cite{XY-1}.
\begin{align}
&\ket{\Phi(u)}=\bigotimes_{j}\phi_j(u+\tfrac{\bm{1}\cdot \bm{n}_j}{2}),\qquad u\in\mathbb{C},\label{SHS:XY:2}\\
&\phi(u)=\bell{1}^2(u)\ket{1}-\bell{4}^2(u)\ket{-1},\,\,\bm{1}=\{1,1,\ldots,1\}.\no
\end{align}
The tensor product state in Eq. (\ref{SHS:XY:2}) exhibits similar helical properties, and each local basis vector $\phi_j(u)$ composing the eigenstate is a superposition of $\ket{1}$ and $\ket{-1}$, with $\{L_1,\ldots,L_d\}$ restricted to even integers to match the periodic boundary condition. By contrast, our local vectors $\{\psi_j^{(1)}(u)\}$ are linear combinations of $\ket{1},\ket{-1}$ and $\ket{0}$, and $\{L_1,\ldots,L_d\}$ are constrained to multiples of 4. Most importantly, the eigenstate introduced in Eq. (\ref{SHS:XY:2}) is valid solely for the spin-$1$ XY system, whereas we demonstrate a universal result applicable to arbitrary spin-$s$ systems. See Appendix \ref{Spin-1;XY} for more details.



\section{exact eigenstates in the integrable subspace of XYZ model}\label{sec:esXYZ}

As discussed in the preceding sections, the SHS in the XYZ spin chain belongs to an invariant subspace that exhibits signatures of integrability. For the XXZ model, this subspace admits a complete basis \eqref{tower state1}, which exhibits a remarkably simple structure. A natural and important question then arises: in the fully anisotropic XYZ case, is it still possible to explicitly construct the a $u$-independent basis forming this invariant subspace?

In this section, we discuss the above question by exploring the structure of the invariant subspace in the general XYZ case.
By performing an explicit expansion of the SHS, we succeeded in identifying a basis of this subspace. 

Define the following functions
\begin{align}
Q(u)=\frac{\bell{1}(u)}{\bell{4}(u)},\quad P(u)=\frac{1}{\pi  \bell{2}(0)\bell{3}(0)}\frac{\partial Q(u)}{\partial u}.
\end{align}
which satisfy the following identities
\begin{align}
&Q(u+v)=\frac{Q(u)P(v) +Q(v)P(u)}{1-Q^2(u) Q^2(v)}, \\
&P^2(u)=\left[1-\frac{Q^2(u)}{Q^2(\frac12)}\right] \left[1-Q^2(\tfrac12) Q^2(u)\right].\label{P;Q}
\end{align}

The state $\psi^{(s)}(u+k\eta)$ thus can be rewritten as 
\begin{align}
\psi^{(s)}(u+k\eta)\propto&\sum_{n=0}^{2s}\kappa_n\left[Q(u)P(k\eta) +Q(k\eta)P(u)\right]^{2s-n}\no \\&\times \left[1-Q^2(u) Q^2(k\eta)\right]^{n}\ket{s-n}. 
\end{align}
From Eq. \eqref{P;Q}, \( P^2(u) \) can be fully determined by \( Q(u) \) and 
\begin{align}
P(u)=\pm \sqrt{\left[1-\frac{Q^2(u)}{Q^2(\frac12)}\right] \left[1-Q^2(\tfrac12) Q^2(u)\right]}.\label{PQ}
\end{align}
We can expand the SHS with $Q^n(u)$ and $P(u)Q^n(u)$ as follows  
\begin{align}
&\ket{\Psi^{(s)}(u,\bm{\eps})}\propto\bigotimes_{j}\psi^{(s)}_{j}\left(u+\eta\,\bm{\eps}\cdot\bm{n}_j\right)\no\\
&=\sum_{n=0}Q^n(u)\ket{\tilde\Phi^{(s)}_n(\bm{\eps})}+\sum_{n=0}P(u)Q^n(u)\ket{\bar\Phi^{(s)}_n(\bm{\eps})}.
\end{align}
Since $u$ is an arbitrary parameter and one can select the $+$ or $-$ sign in Eq. \eqref{PQ}, it can be straightforwardly proven that these $u$-independent states  $\left\{\ket{\tilde\Phi^{(s)}_n(\bm{\eps})}\right\}$ and $\left\{\ket{\bar\Phi^{(s)}_n(\bm{\eps})}\right\}$ are indeed eigenstates of the Hamiltonian \eqref{Hamiltonian} and form an invariant subspace under periodic boundary condition and Eq. (\ref{RootUnity}). 
Due to Eq.~\eqref{P;Q}, the expressions for the states \( \left\{\ket{\tilde\Phi^{(s)}_n(\bm{\epsilon})} \right\}\) and \( \left\{\ket{\bar\Phi^{(s)}_n(\bm{\epsilon})} \right\}\) become increasingly complex as \( n \) grows. It should also be noted that \( \left\{\ket{\tilde\Phi^{(s)}_n(\bm{\epsilon})}\right\} \) and \( \left\{\ket{\bar\Phi^{(s)}_n(\bm{\epsilon})} \right\}\) may not be all linearly independent. As a consequence, determining the dimension of the invariant subspace analytically is a challenge. 



Let us consider the simplest case: a one-dimensional spin-$\frac{1}{2}$ XYZ chain. The SHS is \begin{align}
\bigotimes_{n=1}^L\psi^{(\frac12)}_{n}\left(u\pm n\eta\right).
\end{align}
After some calculations, we can derive the expression of $\ket{\tilde\Phi^{(s)}_{0,1}(\pm1)},\ket{\bar\Phi^{(s)}_{0,1}(\pm1)}$, specifically as follows 
\begin{widetext}
\begin{align}
\ket{\tilde\Phi^{(\frac12)}_0(\pm 1)}=&\sum_{{\rm even}\, m}\,\sum_{n_1<n_2\cdots <n_m}\,\prod_{j=1}^mQ(\pm n_j\eta)\sigma_{n_1}^+\sigma_{n_2}^+\cdots\sigma_{n_m}^+\ket{\bar\xi},\\
\ket{\bar\Phi^{(\frac12)}_0(\pm1)}=&\sum_{{\rm odd}\, m}\,\sum_{n_1<n_2\cdots< n_m}\,\prod_{j=1}^mQ(\pm n_j\eta)\sigma_{n_1}^+\sigma_{n_2}^+\cdots\sigma_{n_m}^+\ket{\bar\xi},\\
\ket{\tilde\Phi^{(\frac12)}_1(\pm 1)}=&\sum_{{\rm even}\, m}\,\sum_{n_1<n_2\cdots <n_m}\,\sum_{k\notin\{n_1,\ldots,n_m\}}P(\pm k\eta)\prod_{j=1}^mQ(\pm n_j\eta)\sigma_{k}^+\sigma_{n_1}^+\sigma_{n_2}^+\cdots\sigma_{n_m}^+\ket{\bar\xi},\\
\ket{\bar\Phi^{(\frac12)}_1(\pm1)}=&\sum_{{\rm odd}\, m}\,\sum_{n_1<n_2\cdots< n_m}\,\sum_{k\notin\{n_1,\ldots,n_m\}}P(\pm k\eta)\prod_{j=1}^mQ(\pm n_j\eta)\sigma_k^+\sigma_{n_1}^+\sigma_{n_2}^+\cdots\sigma_{n_m}^+\ket{\bar\xi},
\end{align}
\end{widetext}
where $\ket{\bar\xi}=\ket{-\tfrac12}_1\otimes\ket{-\tfrac12}_2 \cdots\otimes\ket{-\tfrac12}_L$. 

The state \( \ket{\tilde\Phi_0^{(\frac{1}{2})}(\pm1)} \) represents an excitation comprising all states with an even number of spin flips, modulated by the elliptic function \( Q(\pm n_j \eta) \), while \( \ket{\bar\Phi_0^{(\frac{1}{2})}(\pm 1)} \) consists of all states with an odd number of spin flips, governed by the same modulation. The states \( \ket{\tilde\Phi_1^{(\frac12)}(\pm 1)}\) and \( \ket{\bar\Phi_1^{(\frac12)}(\pm 1)} \) can be interpreted as excitations constructed from \( \ket{\tilde\Phi_0^{(\frac12)}(\pm 1)} \) and \( \ket{\bar\Phi_0^{(\frac12)}(\pm 1)} \), respectively, with an additional excitation modulated by the function \(P(\pm k\eta)\). For systems with $s>\frac12$ or $d>1$, the same approach can be employed to drive $\ket{\tilde\Phi^{(s)}_{0,1}(\bm{\epsilon})}$ and $\ket{\bar\Phi^{(s)}_{0,1}(\bm{\epsilon})}$. However, when $n\geq 2$, the structure of $\ket{\tilde\Phi^{(s)}_{n}(\bm{\epsilon})}$ and $\ket{\bar\Phi^{(s)}_{n}(\bm{\epsilon})}$ becomes progressively more complex, and we are, at present, unable to identify simple analytical representations.

The existence of these eigenstates indicates that, even though the XYZ model without the \( U(1) \) symmetry, it still admits a class of analytic and well-structured excitations that partially retain features of integrability. Moreover, the decomposition into \( \left\{\ket{\tilde\Phi^{(s)}_n(\bm{\epsilon})} \right\}\) and \( \left\{\ket{\bar\Phi^{(s)}_n(\bm{\epsilon})} \right\}\) reflects a distinctive even–odd subspace separation in the excitation structure, which is a characteristic of the XYZ model.

\section{spin helix eigenstate in the Heisenberg models with other settings}\label{sec:SHS}

In the previous sections, we study the Heisenberg model with only nearest-neighbor interactions in a hypercubic lattice. In this section, we will demonstrate that the Heisenberg model with other settings can still have spin helix eigenstates, provided the exchange coefficients are appropriately modulated.

\paragraph*{Heisenberg models with long-range interactions }
Let us consider the following periodic spin-$s$ Heisenberg model
\begin{align}
&H=\sum_kF_k\sum_{\langle i,j\rangle_k }H_{i,j}^{(k)},\label{GenericH}\\
&H_{i,j}^{(k)}=J_x^{(k)}{S}_i^x S_{j}^x+J_y^{(k)}S_i^y S_{j}^y+J_z^{(k)}S_i^zS_{j}^z,
\end{align}
where $\langle i,j\rangle_k$  represents the $k$-th nearest-neighbor pairs along the axial lattice directions, $\{F_k\}$ are free parameters and the exchange coefficients $\left\{J_\alpha^{(k)}\right\}$ are 
\begin{align}
&\left\{J_x^{(k)},J_y^{(k)},J_z^{(k)}\right\}=\left\{\frac{\ell{4}(k\eta)}{\ell{4}(0)},\frac{\ell{3}(k\eta)}{\ell{3}(0)},\frac{\ell{2}(k\eta)}{\ell{2}(0)}\right\}.
\end{align}

Using the divergence condition (\ref{div;eq}), we can prove that 
the state $\ket{\bar\Psi^{(s)}(u,\bm{\eps})}$ defined in Eq. (\ref{psi1}) is an eigenstate of the Hamiltonian in Eq. (\ref{GenericH}) under the condition in Eq. (\ref{RootUnity}).

\paragraph*{Direction-dependent Heisenberg models} 
In Eqs. \eqref{Hamiltonian} and \eqref{Hij}, the nearest-neighbor interactions in all spatial directions are modulated by a uniform parameter $\eta$. We may generalize this model by introducing a direction-dependent modulation parameter $\eta_\alpha$ for interactions along the $\alpha$-th spatial direction
\begin{align}
J_{x,\alpha}=\frac{\ell{4}(\eta_\alpha)}{\ell{4}(0)},\quad J_{y,\alpha}=\frac{\ell{3}(\eta_\alpha)}{\ell{3}(0)},\quad J_{z,\alpha}=\frac{\ell{2}(\eta_\alpha)}{\ell{2}(0)}.
\end{align}
One can rigorously demonstrate that the spin helix state
\begin{align}
\begin{aligned}
\ket{\Psi^{(s)}(u,\bm{\eps})}&=\bigotimes_{j}\psi^{(s)}_{j}\left(u+\mathfrak{a}(\bm\eta,\bm{\eps},\bm{n}_j)\right),\quad u\in\mathbb{C},\\
\mathfrak{a}(\bm\eta,\bm{\eps},\bm{n}_j)&=\eps_1\eta_1n_{j,1}+\dots+\eps_d\,\eta_dn_{j,d}
\end{aligned}\label{psi:spin:ladder}
\end{align}
is an eigenstate of the Hamiltonian under the following condition
\begin{align}
L_\alpha\eta_\alpha=2 p_\alpha\tau  + 2 q_\alpha,\,\,\alpha=1,\dots,d,\,\,p_\alpha,q_\alpha\in\mathbb{Z}.
\end{align}

\paragraph*{Heisenberg models in triangular and kagome lattices} 

We focus only on the hypercubic lattice in the previous sections. However, the spin helix eigenstates can also exist in triangular and kagome lattices when the anisotropic parameter $\eta$ takes the following specific values \cite{Ketterle2}
$$\eta=\frac{2p_\alpha\tau}{3}+\frac{2q_\alpha}{3},\quad L_\alpha=3n_\alpha,\quad p_\alpha,q_\alpha\in\mathbb{Z},\quad n_\alpha\in\mathbb{N}^+.$$


\section{Summary and outlook}\label{sec:summary}

In this paper, we study the 
$d$-dimensional spin-$s$ Heisenberg model under periodic boundary conditions. We examine both the fully anisotropic XYZ model and partially anisotropic cases, including the XXZ and XY models. It is demonstrated that the anisotropic Heisenberg model has the spin helix eigenstate
$$\bigotimes_{j}\psi^{(s)}_{j}\left(u+\eta\,\bm{\eps}\cdot\bm{n}_j\right).$$
This type of tensor-product state exhibits the following characteristics:
(1) The phase factor $u$ in $\psi^{(s)}_{j}(u)$ varies linearly with the lattice coordinate. (2) The phase difference between neighboring sites is $\pm\eta$, which is compatible with the anisotropic parameters. (3) The parameter $\eta$ and the system length $\{L_1,\ldots,L_d\}$ must satisfy specific constraints (Eqs. \eqref{RootUnity} and \eqref{eq:root_of_unity}) to comply with the periodic boundary conditions.

An interesting follow-up study concerns the spin helix eigenstate of the Heisenberg chain under other boundary conditions. For instance, one can verify that the following one-dimensional open Heisenberg chain
\begin{align}
H=\sum_{j=1}^{L-1}H_{j,j+1}- s  a(u_0+\eta)S_1^z + s a(u_0+L\eta)S_L^z,
\end{align}
admits a SHS of the form $\bigotimes_{n=1}^L\psi_n^{(s)}(u_0+n\eta)$. Unlike the periodic system, the phase factor $u_0$ is uniquely determined by the boundary magnetic fields. 

Another promising direction involves Heisenberg-type models with spatially varying magnetic fields. Previous studies have shown that appropriately engineered site-dependent fields, or the inclusion of additional interactions such as Dzyaloshinskii–Moriya terms, can stabilize spin helix states~\cite{cerezo2015nontransverse,cerezo2016factorization,gerken2025all}, which, in some cases, even correspond to exact ground states~\cite{cerezo2017factorization}. Building upon the present work, it would be natural to explore whether similar exact helical eigenstates can arise in more general lattice geometries or higher-dimensional systems. Such generalizations could unveil new families of analytically tractable states within otherwise non-integrable models.

Finally, the degeneracy of spin helix eigenstates remains an important open question. In the XXZ model, a complete set of linearly independent eigenstates can be constructed within the invariant subspace to account for the observed degeneracies. In contrast, while small-site numerical studies of the XYZ model suggest a comparable degeneracy pattern, a general understanding—especially in systems with higher spin or spatial dimension—has yet to be achieved. Elucidating the structure of these invariant subspaces and their associated degeneracies is a key challenge for future investigations.

\textbf{Note added:} After completing this work, we became aware of a recent preprint with some overlapping content \cite{Bhowmick2025}. They note that the identification of spin helix eigenstates in the one-dimensional high-spin XYZ model dating back to 1985 \cite{granovskii1985coherent,granovskii1985periodic}. Our study is conducted independently, employs a different set of notations, and provides a more detailed analysis of spin helix eigenstates in the XXZ and XY limits, as well as extensions to higher spatial dimensions.

\begin{acknowledgments}
We thank Y. Wang and Y. Wan for valuable discussions. M.Z. acknowledges financial support from the National
Natural Science Foundation of China (No. 12447130). X.Z. acknowledges financial support from the National
Natural Science Foundation of China (No. 12204519).

\end{acknowledgments}

%


\onecolumngrid
\newpage
\renewcommand{\theequation}{\thesection\arabic{equation}}
\renewcommand{\thefigure}{S\arabic{figure}}
\renewcommand{\thetable}{S\arabic{table}}
\setcounter{equation}{0}
\setcounter{figure}{0}
\setcounter{table}{0}

\appendix
\section{Jacobi theta functions}\label{App;Theta}
Some useful identities for the elliptic functions used in this paper are  
\begin{align}
&\ell{2}(u)=\ell{1}(u+\tfrac12),\quad \ell{3}(u)=\eE^{\ir \pi(u+\frac{\tau }{4})}\ell{1}(u+\tfrac{1+\tau}{2}),\quad \ell{4}(u)=-\eE^{\ir\pi(u+\frac{\tau}{4}+\frac12)}\ell{1}(u+\tfrac{\tau}{2}),\label{four;theta}\\
&\ell{1}(-u)=-\ell{1}(u),\quad \ell{\alpha}(-u)=\ell{\alpha}(u),\quad \alpha=2,3,4,\\
&\ell{\alpha}(u+1)=-\ell{\alpha}(u),\quad \ell{\alpha'}(u+1)=\ell{\alpha'}(u),\quad \alpha=1,2,\quad \alpha'=3,4,\label{periodicity;1}\\
&\ell{\alpha}(u+\tau)=-\eE^{-\ir\pi(2u+\tau)}\ell{\alpha}(u),\quad \ell{\alpha'}(u+\tau)=\eE^{-\ir\pi(2u+\tau)}\ell{\alpha'}(u),\quad \alpha=1,4,\quad \alpha'=2,3,\label{periodicity;2}\\
&\frac{\bell{1}(2u)}{\bell{4}(0)}=\frac{\ell{1}(u)\ell{2}(u)}{\ell{3}(0)\ell{4}(0)},\quad \frac{\bell{4}(2u)}{\bell{4}(0)}=\frac{\ell{3}(u)\ell{4}(u)}{\ell{3}(0)\ell{4}(0)},
\quad \frac{\ell{1}(u)}{\ell{2}(0)}=\frac{\bell{1}(u)\bell{4}(u)}{\bell{2}(0)\bell{3}(0)},
\label{Landen}\\
&\bell{1}(u+v)\bell{1}(u-v)\bell{4}^2(0)=\bell{1}^2(u)\bell{4}^2(v)-\bell{1}^2(v)\bell{4}^2(u),\label{th_th;1}\\
&\bell{4}(u+v)\bell{4}(u-v)\bell{4}^2(0)=\bell{4}^2(u)\bell{4}^2(v)-\bell{1}^2(v)\bell{1}^2(u),\label{th_th;2}\\
&2\bell{1}(u+v)\bell{1}(u-v)=\ell{4}(u)\ell{3}(v)-\ell{4}(v)\ell{3}(u),\label{theta;11}\\
&2\bell{4}(u+v)\bell{4}(u-v)=\ell{4}(u)\ell{3}(v)+\ell{4}(v)\ell{3}(u),\label{theta;44}\\
&2\bell{4}(u+v)\bell{1}(u-v)=\ell{1}(u)\ell{2}(v)-\ell{1}(v)\ell{2}(u).\label{theta;14}
\end{align}
Define the following functions
\begin{align}
\theta'_j(u)=\frac{\partial\theta_j(u)}{\partial u},\quad \bell{j}'(u)=\frac{\partial\bell{j}(u)}{\partial u},\quad  \zeta(u)=\frac{\ell{1}'(u)}{\ell{1}(u)},\quad \tilde\zeta(u)=\frac{\bell{1}'(u)}{\bell{1}(u)},
\end{align}
which possess the following properties 
\begin{align}
	&\zeta(u)=-\zeta(-u),\quad 
	\zeta(u+1)=\zeta(u),\quad \zeta(u+\tau)=\zeta(u)-2\ir\pi,\label{zeta;1}\\
	&\tilde\zeta(u)=-\tilde\zeta(-u),\quad 
	\tilde\zeta(u+1)=\tilde\zeta(u),\quad \tilde\zeta(u+2\tau)=\tilde\zeta(u)-2\ir\pi,\label{zeta;2}\\
	&2\tilde\zeta(2u)=\zeta(u)+\zeta(u+\tfrac12),\quad \zeta(u)=\ir\pi+\tilde\zeta(u)+\tilde\zeta(u+\tau),\label{zeta;3}\\
	&2\zeta(u)=2\ir\pi+\zeta(\tfrac{u}{2})+\zeta(\tfrac{u+1}{2})+\zeta(\tfrac{u+\tau}{2})+\zeta(\tfrac{u+\tau+1}{2}).\label{zeta;4}
\end{align}
The functions $\ell{\alpha}(u),\,\bell{\alpha}(u),\,\zeta(u),\,\bar\zeta(u)$ satisfy the following identities 
\begin{align}
&\frac{\ell{2}(u)}{\ell{1}(u)}=\frac{\ell{2}(0)}{\ell{1}'(0)}\left[\zeta(\tfrac{u}{2})+\zeta(\tfrac{u+1}{2})-\zeta(u)\right],\label{zeta;sigma;1}\\
&\frac{\bell{4}(\eta)\bell{1}(x_1+x_2)\bell{1}(x_1+\eta)\bell{1}(x_2+\eta)}{\bell{4}(0)\bell{1}(x_1)\bell{1}(x_2)\bell{1}(x_1+x_2+\eta)}=\frac{\ell{1}(\eta)}{\ell{1}'(0)}\left[\tilde\zeta(x_1)+\tilde\zeta(x_2)+\tilde\zeta(\eta)-\tilde\zeta(x_1+x_2+\eta)\right].\label{zeta;sigma;5}
\end{align}

\section{Proof of Eqs. (\ref{rotation}) and (\ref{Exp})}\label{App;coherent}
Define the following operator 
\begin{align}
\mathbb{S}(x,y)=\sin (x)\cos (y)S_x + \sin (x) \sin (y)S_y+ \cos (x)S_z.\label{OperatorS-1}
\end{align}
Suppose that 
\begin{align}
&\frac{\bell{4}(u)}{\bell{1}(u)}=\eE^{\alpha(u)+\ir \beta(u)},\quad \alpha(u),\beta(u)\in\mathbb{R},\\
&\mbox{or equivalently}\,\,\beta(u)=\arg\left(\frac{\bell{4}(u)}{\bell{1}(u)}\right),\quad \eE^{\alpha(u)}=\left|\frac{\bell{4}(u)}{\bell{1}(u)}\right|.
\end{align}
Then, the operator $\mathbb{S}(\gamma(u),\beta(u))$ reads
\begin{align}
\mathbb{S}(\gamma,\beta)=\frac{\eE^{\ir \beta}}{\eE^{\alpha}+\eE^{-\alpha}}S_-+\frac{ \eE^{-\ir\beta}}{\eE^{\alpha}+\eE^{-\alpha}}S_+-\frac{ \eE^{\alpha}-\eE^{-\alpha}}{\eE^{\alpha}+\eE^{-\alpha}}S_z,\quad \tan\left(\frac{\gamma}{2}\right)=\eE^{\alpha},\quad S^\pm=S^x\pm\ir S^y.\label{OperatorS-2}
\end{align}
Here and below, we omit the parameter $u$ for convenience.
One can derive 
\begin{align}
&\quad \mathbb{S}(\gamma,\beta)\psi^{(s)}(u)\no\\
&=\mathcal{X}\mathbb{S}(\gamma,\beta)\left\{\sum_{m}\kappa_{s-m}\left[\frac{\bell{1}(u)}{\bell{4}(u)}\right]^{m}\ket{m}\right\}\no\\
&=\mathcal{X}\mathbb{S}(\gamma,\beta)\left\{\sum_m\kappa_{s-m}\eE^{-m\alpha-\ir m\beta}\ket{m}\right\}\no\\
&\overset{(\ref{OperatorS-2})}{=}\mathcal{X}\left\{-\sum_m\kappa_{s-m}\frac{ \eE^{\alpha}-\eE^{-\alpha}}{\eE^{\alpha}+\eE^{-\alpha}}\eE^{-m\alpha-\ir m\beta}m\ket{m}+ \sum_m\kappa_{s-m+1}\frac{ \eE^{-m\alpha+\alpha-\ir m\beta}}{\eE^{\alpha}+\eE^{-\alpha}}\lambda_{m-1}^+\ket{m}\right.\no\\
&\quad +\left.\sum_m\kappa_{s-m-1}\frac{\eE^{-m\alpha-\alpha-\ir m\beta}}{\eE^{\alpha}+\eE^{-\alpha}}\lambda_{m+1}^-\ket{m}\right\}\no\\
&\overset{(\ref{lambda;kappa})}{=}\mathcal{X}\left\{s\sum_m\eE^{-m\alpha-\ir m\beta}\ket{m}\right\}=s\psi^{(s)}(u),\label{Proof:Rotation}
\end{align}
where $\mathcal{X}=[\bell{1}(u)\bell{4}(u)]^s/\mathcal{N}(u)$. Equation (\ref{Proof:Rotation}) demonstrates that $\psi^{(s)}(u)$ is the highest-weight state of $\mathbb{S}(\gamma(u),\beta(u))$, and this yields Eqs. (\ref{rotation}) and (\ref{Exp}).

\section{Proof of Eq. (\ref{div;eq})}\label{App;Proof}

The $S^z$ basis $\{\ket{m}\}$ satisfy 
\begin{align}
&S^z\ket{m}=m\ket{m},\quad m=s,s-1,\ldots,-s,\\
&S^+\ket{m}=\l^+_{m}\ket{m+1},\quad \l^+_{m}=\sqrt{(s-m) (s+m+1)},\\
&S^-\ket{m}=\l^-_{m}\ket{m-1},\quad \l^-_{m}=\sqrt{(s+m)(s-m+1)}.
\end{align}
Let us recall the vector $\psi^{(s)}(u)$
\begin{align}
\psi^{(s)}(u)=\sum_{m=-s}^{s}\kappa_{s-m}w_m(u)\ket{m},\quad w_m(u)=\left[\bell{1}(u)\right]^{s+m}\left[\bell{4}(u)\right]^{s-m}
\end{align}
Here we omit the normalization factor for convenience. Rewrite the local Hamiltonian $H_{i,j}$ as 
\begin{align}
\frac{J_+}{4} (S_i^-S_j^++S_i^+S_j^-)+ \frac{J_-}{4}(S_i^+S_j^++S_i^-S_j^-)+J_zS_i^zS_j^z,\quad 
\end{align}
where 
\begin{align}
J_+=J_x+J_y\overset{(\ref{theta;44})}{=}\frac{2\bell{4}^2(\eta)}{\bell{4}^2(0)},\quad J_-=J_x-J_y\overset{(\ref{theta;11})}{=}\frac{2\bell{1}^2(\eta)}{\bell{4}^2(0)}.\label{Jpm}
\end{align}

Let \(L_{m,n}(u)\) and \(R_{m,n}(u)\) represent the overlap of \(\bra{m}_i \otimes \bra{n}_j\) with the left and right sides of Eq. (\ref{div;eq}) (here we choose the $+$ sign), respectively. One can easily obtain 
\begin{align}
\quad\frac{R_{m,n}(u)}{\kappa_{s-m}\kappa_{s-n}w_m(u)w_n(u+\eta)}=sma(u)-sna(u+\eta)+s^2b(u).
\end{align} 
After some calculations, we derive 
\begin{align}
&\quad\frac{L_{m,n}(u)}{\kappa_{s-m}\kappa_{s-n}w_m(u)w_n(u+\eta)}\no\\
&=J_zmn+\frac{J_+}{4} \frac{\l^-_{m+1}\l^+_{n-1}\kappa_{s-m-1}\kappa_{s-n+1}w_{m+1}(u)w_{n-1}(u+\eta)}{\kappa_{s-m}\kappa_{s-n}w_m(u)w_n(u+\eta)}\no\\
&\quad +
\frac{J_+}{4} \frac{\l^+_{m-1}\l^-_{n+1}\kappa_{s-m-1}\kappa_{s-n+1}w_{m-1}(u)w_{n+1}(u+\eta)}{\kappa_{s-m}\kappa_{s-n}w_m(u)w_n(u+\eta)}\no\\
&\quad+\frac{J_-}{4} \frac{\l^-_{m+1}\l^-_{n+1}\kappa_{s-m-1}\kappa_{s-n-1}w_{m+1}(u)w_{n+1}(u+\eta)}{\kappa_{s-m}\kappa_{s-n}w_m(u)w_n(u+\eta)}\no\\
&\quad +\frac{J_-}{4} \frac{\l^+_{m-1}\l^+_{n-1}\kappa_{s-m+1}\kappa_{s-n+1}w_{m-1}(u)w_{n-1}(u+\eta)}{\kappa_{s-m}\kappa_{s-n}w_m(u)w_n(u+\eta)}\no\\
&=J_zmn+\frac{J_+}{4}(s-m)(s+n) \frac{w_{m+1}(u)w_{n-1}(u+\eta)}{w_m(u)w_n(u+\eta)}+
\frac{J_+}{4}(s+m)(s-n) \frac{w_{m-1}(u)w_{n+1}(u+\eta)}{w_m(u)w_n(u+\eta)}\no\\
&\quad+\frac{J_-}{4}(s-m)(s-n) \frac{w_{m+1}(u)w_{n+1}(u+\eta)}{w_m(u)w_n(u+\eta)} +\frac{J_-}{4}(s+m)(s+n) \frac{w_{m-1}(u)w_{n-1}(u+\eta)}{w_m(u)w_n(u+\eta)}\no\\
&=s^2{\cal F}_{s,s}(u)+sm{\cal F}_{s,m}(u)+sn{\cal F}_{s,n}(u)+mn{\cal F}_{m,n}(u).\label{def;L}
\end{align} 
In Eq. (\ref{def;L}), we use the identities
\begin{align}
\l^-_{k+1}\kappa_{s-k-1}=(s-k)\kappa_{s-k},\quad \l^+_{k-1}\kappa_{s-k+1}=(s+k)\kappa_{s-k}.\label{lambda;kappa}
\end{align}
The function ${\cal F}_{m,n}(u)$ in Eq. (\ref{def;L}) is 
\begin{align}
&J_z-\frac{J_+}{4}\frac{w_{m+1}(u)w_{n-1}(u+\eta)}{w_m(u)w_n(u+\eta)}-\frac{J_+}{4}\frac{w_{m-1}(u)w_{n+1}(u+\eta)}{w_m(u)w_n(u+\eta)} \no\\
&+\frac{J_-}{4} \frac{w_{m+1}(u)w_{n+1}(u+\eta)}{w_m(u)w_n(u+\eta)}+\frac{J_-}{4} \frac{w_{m-1}(u)w_{n-1}(u+\eta)}{w_m(u)w_n(u+\eta)}\no\\
&=J_z-\frac{\bell{4}^2(\eta)}{2\bell{4}^2(0)}\frac{\bell{1}(u)\bell{4}(u+\eta)}{\bell{4}(u)\bell{1}(u+\eta)}+\frac{\bell{1}^2(\eta)}{2\bell{4}^2(0)} \frac{\bell{4}(u)\bell{4}(u+\eta)}{\bell{1}(u)\bell{1}(u+\eta)}\no\\
&\quad -\frac{\bell{4}^2(\eta)}{2\bell{4}^2(0)}\frac{\bell{4}(u)\bell{1}(u+\eta)}{\bell{1}(u)\bell{4}(u+\eta)} +\frac{\bell{1}^2(\eta)}{2\bell{4}^2(0)} \frac{\bell{1}(u)\bell{1}(u+\eta)}{\bell{4}(u)\bell{4}(u+\eta)} \no\\
&\overset{(\ref{th_th;1}),(\ref{th_th;2})}{=}J_z-\frac{1}{2}\frac{\bell{1}(u-\eta)\bell{4}(u+\eta)}{\bell{1}(u)\bell{4}(u)}-\frac{1}{2}\frac{\bell{4}(u-\eta)\bell{1}(u+\eta)}{\bell{1}(u)\bell{4}(u)}\no\\
&\overset{(\ref{theta;14})}{=}J_z-\frac{\ell{1}(u)\ell{2}(\eta)}{\ell{1}(u)\ell{2}(0)}=0.\label{Fmn}
\end{align}
The function ${\cal F}_{s,m}(u)$ in Eq. (\ref{def;L}) reads
\begin{align}
&-\frac{J_+}{4} \frac{w_{m+1}(u)w_{n-1}(u+\eta)}{w_m(u)w_n(u+\eta)}+
\frac{J_+}{4} \frac{w_{m-1}(u)w_{n+1}(u+\eta)}{w_m(u)w_n(u+\eta)}\no\\
&\quad-\frac{J_-}{4} \frac{w_{m+1}(u)w_{n+1}(u+\eta)}{w_m(u)w_n(u+\eta)} +\frac{J_-}{4} \frac{w_{m-1}(u)w_{n-1}(u+\eta)}{w_m(u)w_n(u+\eta)}\no\\
&=-\frac{\bell{4}^2(\eta)}{2\bell{4}^2(0)}\frac{\bell{1}(u)\bell{4}(u+\eta)}{\bell{4}(u)\bell{1}(u+\eta)}+\frac{\bell{1}^2(\eta)}{2\bell{4}^2(0)} \frac{\bell{4}(u)\bell{4}(u+\eta)}{\bell{1}(u)\bell{1}(u+\eta)}\no\\
&\quad +\frac{\bell{4}^2(\eta)}{2\bell{4}^2(0)}\frac{\bell{4}(u)\bell{1}(u+\eta)}{\bell{1}(u)\bell{4}(u+\eta)} -\frac{\bell{1}^2(\eta)}{2\bell{4}^2(0)} \frac{\bell{1}(u)\bell{1}(u+\eta)}{\bell{4}(u)\bell{4}(u+\eta)}\no\\
&\overset{(\ref{th_th;1}),(\ref{th_th;2})}{=}-\frac{1}{2}\frac{\bell{1}(u-\eta)\bell{4}(u+\eta)}{\bell{1}(u)\bell{4}(u)}+\frac{1}{2}\frac{\bell{4}(u-\eta)\bell{1}(u+\eta)}{\bell{1}(u)\bell{4}(u)}\no\\
&\overset{(\ref{theta;14})}{=}a(u).\label{Fsm}
\end{align}
The function ${\cal F}_{s,n}(u)$ in Eq. (\ref{def;L}) equals to
\begin{align}
&\frac{J_+}{4} \frac{w_{m+1}(u)w_{n-1}(u+\eta)}{w_m(u)w_n(u+\eta)}-
\frac{J_+}{4} \frac{w_{m-1}(u)w_{n+1}(u+\eta)}{w_m(u)w_n(u+\eta)}\no\\
&\quad-\frac{J_-}{4} \frac{w_{m+1}(u)w_{n+1}(u+\eta)}{w_m(u)w_n(u+\eta)} +\frac{J_-}{4} \frac{w_{m-1}(u)w_{n-1}(u+\eta)}{w_m(u)w_n(u+\eta)}\no\\
&=\frac{\bell{4}^2(\eta)}{2\bell{4}^2(0)}\frac{\bell{1}(u)\bell{4}(u+\eta)}{\bell{4}(u)\bell{1}(u+\eta)}-\frac{\bell{1}^2(\eta)}{2\bell{4}^2(0)} \frac{\bell{1}(u)\bell{1}(u+\eta)}{\bell{4}(u)\bell{4}(u+\eta)}\no\\
&\quad -\frac{\bell{4}^2(\eta)}{2\bell{4}^2(0)}\frac{\bell{4}(u)\bell{1}(u+\eta)}{\bell{1}(u)\bell{4}(u+\eta)} +\frac{\bell{1}^2(\eta)}{2\bell{4}^2(0)} \frac{\bell{4}(u)\bell{4}(u+\eta)}{\bell{1}(u)\bell{1}(u+\eta)}\no\\
&\overset{(\ref{th_th;1}),(\ref{th_th;2})}{=}\frac{1}{2}\frac{\bell{1}(u)\bell{4}(u+2\eta)}{\bell{1}(u+\eta)\bell{4}(u+\eta)}-\frac{1}{2}\frac{\bell{4}(u)\bell{1}(u+2\eta)}{\bell{1}(u+\eta)\bell{4}(u+\eta)}\no\\
&\overset{(\ref{theta;14})}{=}-a(u+\eta).\label{Fsn}
\end{align}
The function ${\cal F}_{s,s}(u)$ in Eq. (\ref{def;L}) is
\begin{align}
&\frac{J_+}{4}\frac{w_{m+1}(u)w_{n-1}(u+\eta)}{w_m(u)w_n(u+\eta)}+\frac{J_+}{4}\frac{w_{m-1}(u)w_{n+1}(u+\eta)}{w_m(u)w_n(u+\eta)} \no\\
&+\frac{J_-}{4} \frac{w_{m+1}(u)w_{n+1}(u+\eta)}{w_m(u)w_n(u+\eta)} +\frac{J_-}{4} \frac{w_{m-1}(u)w_{n-1}(u+\eta)}{w_m(u)w_n(u+\eta)}\no\\
&\overset{(\ref{Fmn})}{=} -J_z + \frac{J_+}{2}  
\left[\frac{\bell{4}(u)\bell{1}(u+\eta)}{\bell{1}(u)  \bell{4}(u+\eta)} + \frac{\bell{1}(u)  \bell{4}(u+\eta)}{\bell{4}(u)  \bell{1}(u+\eta)}  \right]\no\\
&\overset{(\ref{zeta;sigma;1})}{=}\frac{\ell{1}(\eta)}{\ell{1}'(0)}[\zeta(\eta)-2\tilde\zeta(\eta)]+\frac{\bell{4}(\eta)\bell{1}(\eta+\tau)}{\bell{4}(0)\bell{1}(\tau)}\left[\frac{\bell{1}(u+\tau)\bell{1}(u+\eta)}{\bell{1}(u)\bell{1}(u+\eta+\tau)}+\frac{\bell{1}(u+2\tau)\bell{1}(u+\eta+\tau)}{\bell{1}(u+\tau)\bell{1}(u+\eta+2\tau)}\right]\no\\
&\overset{(\ref{zeta;sigma;5})}{=}\frac{\ell{1}(\eta)}{\ell{1}'(0)}[\zeta(\eta)+\tilde\zeta(u)+\tilde\zeta(u+\tau)-\tilde\zeta(u+\eta+\tau)-\tilde\zeta(u+\eta)]\no\\
&\overset{(\ref{zeta;3})}{=}\frac{\ell{1}(\eta)}{\ell{1}'(0)}[\zeta(\eta)+\zeta(u)-\zeta(u+\eta)]=g(\eta)+g(u)-g(u+\eta).\label{Fss}
\end{align}
Substituting Eqs. (\ref{Fmn}) -(\ref{Fss}) into Eq. (\ref{def;L}), we conclude that 
\begin{align}
L_{m,n}(u)=R_{m,n}(u),
\end{align} 
which gives the $+$ sign version of Eq. (\ref{div;eq}). The corresponding $-$ sign result is obtained by noting that the Hamiltonian is an even function of $\eta$.

\section{Another spin helix eigenstates of the spin-1 XY model}\label{Spin-1;XY}

There exists other spin helix eigenstates in the spin-1 XY model, as addressed in Ref. \cite{XY-1}. 
Introduce two vectors 
\begin{align}
\varphi(u)=\ket{1}+\rho(u)\ket{-1},\quad \bar\varphi(u)=\ket{1}+\bar\rho(u)\ket{-1}.
\end{align}
For the spin-1 XY model, we can prove 
\begin{align}
H_{i,j}\varphi_i(u)\,\bar\varphi_j(u)=H_{i,j}\bar\varphi_i(u)\,\varphi_j(u)=0, \quad \mbox{when}\,\,  \bar\rho(u)=-\frac{J_+\rho(u)+J_-}{J_-\rho(u)+J_+}.
\end{align}
Suppose 
\begin{align}
\rho(u)=-\frac{\bell{4}^2(u)}{\bell{1}^2(u)}.
\end{align}
Then, we can get
\begin{align}
\bar\rho(u)&=-\frac{J_+\rho(u)+J_-}{J_-\rho(u)+J_+}=-\frac{-J_+\bell{4}^2(u)+J_-\bell{1}^2(u)}{-J_-\bell{4}^2(u)+J_+\bell{1}^2(u)}\no\\
&\overset{(\ref{Jpm})}{=}-\frac{-\bell{4}^2(\frac12)\bell{4}^2(u)+\bell{1}^2(\frac12)\bell{1}^2(u)}{-\bell{1}^2(\frac12)\bell{4}^2(u)+\bell{4}^2(\frac12)\bell{1}^2(u)}\overset{(\ref{th_th;1}),(\ref{th_th;2})}{=}-\frac{\bell{3}^2(u)}{\bell{2}^2(u)}.
\end{align}
Therefore, for the spin-1 XY model with even $\{L_1,\ldots,L_d\}$, the Hamiltonian has the following eigenstate ($E=0$)
\begin{align}
\ket{\Phi(u)}=\bigotimes_{j}\left[\,\bell{1}^2(u+\tfrac{\bm{1}\cdot \bm{n}_j}{2})\ket{1}_j-\bell{4}^2(u+\tfrac{\bm{1}\cdot \bm{n}_j}{2})\ket{-1}_j\right],\quad u\in\mathbb{C}.
\end{align}
In the XX case, we get $J_-=0$, $\bar\rho(u)=-\rho(u)$. Consequently, the eigenstate degenerates into \cite{XY-1}
\begin{align}
\ket{\Phi(u)}=\bigotimes_{j}\left[\ket{1}_j+u\exp(\pi\bm{1}\cdot \bm{n}_j)\ket{-1}_j\right],\quad u\in\mathbb{C}.
\end{align}

\end{document}